  \newcommand{\ra}[1]{\renewcommand{\arraystretch}{#1}}
  \newcommand{\hrulealg}[0]{\vspace{1mm} \hrule \vspace{1mm}}
\newtheorem{thm}{Theorem}
\newtheorem{lem}[thm]{Lemma}
\newtheorem{cor}[thm]{Corollary}
\newtheorem{defn}[thm]{Definition}
\DeclareMathOperator{\A}{\mathcal A}
\DeclareMathOperator{\G}{\mathcal G}
\DeclareMathOperator{\B}{\mathcal B}
\DeclareMathOperator{\C}{\mathcal C}
\newcommand{\eps}{\varepsilon}
\newcommand{\complclass}[1]{{\sc #1}\xspace}
\newcommand{\PSpace}{\complclass{PSpace}}
\begin{document}
\title{K-Step Opacity in Discrete Event Systems: Verification, Complexity, and Relations}
\author{Ji{\v r}{\' i}~Balun and Tom{\' a}{\v s}~Masopust%
  \thanks{J. Balun and T. Masopust are with the Department of Computer Science, Faculty of Science, Palacky University in Olomouc, Czechia. E-mails: {\tt jiri.balun01{@}upol.cz, tomas.masopust{@}upol.cz}.}%
  \thanks{Partially supported by the Ministry of Education, Youth and Sports under the INTER-EXCELLENCE project LTAUSA19098 and by IGA PrF 2021 022.}}

\maketitle

\begin{abstract}
  Opacity is a property expressing whether a system may reveal its secret to a passive observer (an intruder) who knows the structure of the system but has a limited observation of its behavior. Several notions of opacity have been studied, including current-state opacity, K-step opacity, and infinite-step opacity.
  We study K-step opacity that generalizes both current-state opacity and infinite-step opacity, and asks whether the intruder cannot decide, at any time, whether or when the system was in a secret state during the last K observable steps.
  We design a new algorithm deciding K-step opacity the complexity of which is lower than that of existing algorithms and that does not depend on K.
  We then compare K-step opacity with other opacity notions and provide new transformations among the notions that do not use states that are neither secret nor non-secret (neutral states) and that are polynomial with respect to both the size of the system and the binary encoding of K.
\end{abstract}
\begin{IEEEkeywords}
  K-Step Opacity, Discrete event systems, Verification, Complexity, Transformations
\end{IEEEkeywords}

\section{Introduction}
  \IEEEPARstart{P}{roperties} that guarantee to keep some information in a system secret include anonymity~\cite{SchneiderS96}, noninterference~\cite{Hadj-Alouane05}, secrecy~\cite{Alur2006}, security~\cite{Focardi94}, and opacity~\cite{Mazare04}.
  In this paper, we are interested in opacity of systems modeled by finite automata.

  Opacity is an information flow property asking if a system prevents an intruder from revealing the secret. The intruder is a passive observer that knows the structure of the system but has only limited observations of its behavior. Intuitively, the intruder estimates the behavior of the system, and the system is opaque if for every secret behavior, there is a non-secret behavior that looks the same to the intruder.

  There are two common ways to model the secret: a set of secret states and a set of secret behaviors.
  In the former case, the opacity is referred to as state-based, introduced by Bryans et al.~\cite{Bryans2005,BryansKMR08} for systems modeled by Petri nets and transition systems, and later adapted to (stochastic) automata by Saboori and Hadjicostis~\cite{SabooriHadjicostis2007}.
  In the latter case, the opacity is referred to as language-based, introduced by Badouel et al.~\cite{Badouel2007} and Dubreil et al.~\cite{Dubreil2008}.
  For more details, see Jacob et al.~\cite{JacobLF16}.

  Several opacity notions have been studied in the literature, including language-based opacity (LBO), initial-state opacity (ISO), initial-and-final-state opacity (IFO), current-state opacity (CSO), K-step opacity (K-SO), and infinite-step opacity (INSO).
  While initial-state opacity prevents the intruder from revealing, at any step of the computation, whether the system started in a secret state, current-state opacity prevents the intruder only from revealing whether the current state of the system is secret.
  The intruder may, however, realize in the future that the system was in a secret state at a former step of the computation. For example, if the intruder estimates that the system is in one of two states and, in the next step, the system proceeds by an observable event that is possible only from one of the states, then the intruder reveals the state in which the system was one step ago.

  This issue led to the introduction of K-step opacity~\cite{SabooriHadjicostis2007,SabooriH12a}. K-step opacity requires that the intruder cannot ascertain the secret in the current state and K subsequent observable steps. Two special cases for K $=0$ and K $=\infty$ are know as current-state opacity and infinite-step opacity, respectively, though the notion of infinite-step opacity may be confusing for finite automata, because an automaton with $n$ states is infinite-step opaque if and only if it is $(2^n-2)$-step opaque~\cite{Yin2017}.

  The complexity of known algorithms deciding K-step opacity depends on K. For example, the two-way observer of Yin and Lafortune~\cite{Yin2017} has complexity $O(\min\{n2^{2n},n\ell^{K} 2^{n}\})$, including a minor correction by Lan et al.~\cite{Lan2020}, where $n$ is the number of states of the automaton and $\ell$ is the number of observable events. Obviously, the complexity depends on K if $\ell^K < 2^n$.
  We recently designed an algorithm with complexity $O((K+1)2^n (n + m\ell^2))$, where $m \le \ell n^2$ is the number of transitions in the projected NFA, which is faster than the two-way observer if K is larger than $2^n-2$ or polynomial in $n$~\cite{BalunMasopust2021}. The reader can find more methods with their experimental comparisons in Wintenberg et al.~\cite{Wintenberg2021}.

  In this paper, we further improve the complexity of deciding K-step opacity to $O((n+m)2^n)$, which does not depend on K. We then provide new transformations among K-step opacity, current-state opacity, and infinite-step opacity. These transformations have been studied by Balun and Masopust~\cite{BalunMasopust2021}, who have shown that the notions are transformable to each other in polynomial time, and the results do not have more observable events and preserve determinism. However, the transformations from K-step opacity are polynomial only if K is small or considered as constant, whereas a large value of K makes the transformations infeasible. In addition, some of the transformations use {\em neutral states}---states that are neither secret nor non-secret.

  Here we suggest new transformations that are polynomial in both the size of the system and a binary encoding (logarithm) of K, and that do not use neutral states.
  Wu and Lafortune~\cite{WuLafortune2013} studied transformations among other notions of opacity. We refer the reader to Balun and Masopust~\cite{BalunMasopust2021} for an overview of these transformations and the complexity results.

\section{Preliminaries}
  We assume that the reader is familiar with discrete-event systems~\cite{Lbook}. For a set $S$, $|S|$ denotes the cardinality of $S$ and $2^{S}$ its power set. An alphabet $\Sigma$ is a finite nonempty set of events. A string over $\Sigma$ is a sequence of events; the empty string is denoted by $\varepsilon$. The set of all finite strings over $\Sigma$ is denoted by $\Sigma^*$. A language $L$ over $\Sigma$ is a subset of $\Sigma^*$. The set of prefixes of strings of $L$ is the set $\overline{L}=\{u \mid \text{there is } v\in \Sigma^* \text{ such that } uv \in L\}$. For $u \in \Sigma^*$, $|u|$ is the length of $u$.

  A {\em nondeterministic finite automaton\/} (NFA) over an alphabet $\Sigma$ is a structure $\G = (Q,\Sigma,\delta,I,F)$, where $Q$ is a finite  set of states, $I\subseteq Q$ is a set of initial states, $F \subseteq Q$ is a set of marked states, and $\delta \colon Q\times\Sigma \to 2^Q$ is a transition function that can be extended to the domain $2^Q\times\Sigma^*$ by induction.
  For a set $Q_0\subseteq Q$, the set $L_m(\G,Q_0) = \{w\in \Sigma^* \mid \delta(Q_0,w)\cap F \neq\emptyset\}$ is the language marked by $\G$ from the states of $Q_0$, and $L(\G,Q_0) = \{w\in \Sigma^* \mid \delta(Q_0,w)\neq\emptyset\}$ is the language generated by $\G$ from $Q_0$. The languages {\em marked\/} and {\em generated\/} by $\G$ are $L_m(\G)=L_m(\G,I)$ and $L(\G)=L(\G,I)$, respectively.
  For $S\subseteq \Sigma^*$, we write $\delta(Q,S) = \cup_{s\in S}\,\delta(Q, s)$.
  The NFA $\G$ is {\em deterministic\/} (DFA) if $|I|=1$ and $|\delta(q,a)|\le 1$ for every $q\in Q$ and $a \in \Sigma$.

  A {\em discrete-event system\/} (DES) $G$ over $\Sigma$ is an NFA over $\Sigma$ together with the partition of $\Sigma$ into $\Sigma_o$ and $\Sigma_{uo}$ of {\em observable\/} and {\em unobservable events}, respectively. If the marked states are irrelevant, we omit them and simply write $G=(Q,\Sigma,\delta,I)$.

  State estimation is modeled by {\em projection\/} $P\colon \Sigma^* \to \Sigma_o^*$, which is a morphism defined by $P(a) = \varepsilon$ if $a\in \Sigma_{uo}$, and $P(a)= a$ if $a\in \Sigma_o$. The action of $P$ on a string $a_1\cdots a_n$ is to erase unobservable events: $P(a_1\cdots a_n)=P(a_1) \cdots P(a_n)$. The definition can be readily extended to languages.

  Let $G$ be a DES over $\Sigma$ with projection $P\colon \Sigma^* \to \Sigma_o^*$. The {\em projected automaton\/} of $G$ is the NFA $P(G)$ obtained from $G$ by replacing every transition $(p,a,q)$ by $(p,P(a),q)$, and by eliminating the $\eps$-transitions. In particular, if $\delta$ is the transition function of $G$, then the transition function $\gamma\colon Q\times \Sigma_o \to 2^Q$ of $P(G)$ is defined as $\gamma(q,a)=\delta(q,P^{-1}(a))$. Then, $P(G)$ is an NFA over $\Sigma_o$ with the same states as $G$ that recognizes the language $P(L_m(G))$ and that can be constructed in polynomial time~\cite{HopcroftU79}.
  The DFA constructed from $P(G)$ by the standard subset construction is called an {\em observer of $G$}~\cite{Lbook}, which has up to exponentially more states than $G$~\cite{JiraskovaM12,wong98}.

\section{K-Step Opacity and its Verification}\label{ksoAlgo}
  We denote the set of non-negative integers by $\mathbb{N}$. For $K\in\mathbb{N}_\infty = \mathbb{N}\cup\{\infty\}$, K-step opacity asks if the intruder cannot reveal the secret in the current and $K$ subsequent states.

  \begin{defn}
    Given a DES $G=(Q,\Sigma,\delta,I)$ and K~$\in\mathbb{N}_\infty$. System $G$ is {\em K-step opaque (K-SO)} w.r.t. secret states $Q_S$, non-secret states $Q_{NS}$, and $P\colon \Sigma^*\to\Sigma_o^*$ if for every string $st \in L(G)$ with $|P(t)| \leq K$ and $\delta(\delta(I, s)\cap Q_S, t) \neq \emptyset$, there is $s't' \in L(G)$ such that $P(s)=P(s')$, $P(t)=P(t')$, and $\delta (\delta(I,s')\cap Q_{NS}, t') \neq \emptyset$.
  \end{defn}

  Two special cases of K-step opacity include 0-step opacity also known as {\em current-state opacity (CSO)}, and $\infty$-step opacity aka {\em infinite-step opacity (INSO)}~\cite{SabooriH12a}, which, for a DES with $n$ states, coincides with $(2^n-2)$-step opacity~\cite{Yin2017}.

  The complexity of existing algorithms verifying K-SO is exponential and depends on K. Exponential complexity seems unavoidable because the problem is \PSpace-complete~\cite{BalunMasopust2021}.
  We now design an algorithm verifying K-SO with complexity $O((n+m)2^n)$, where $n$ is the number of states of the automaton and $m$ is the number of transitions of the projected NFA, which improves the existing complexity and does not depend on K. Comparing the complexity with that of Wintenberg et al.~\cite{Wintenberg2021}, who neglect the number of transitions in the automata, our complexity can be stated as $O(n2^n)$, which is better than the results in Wintenberg et al.~\cite{Wintenberg2021}.

  \begin{algorithm}[h]
    \DontPrintSemicolon
    \caption{Verification of K-step opacity}
    \label{alg1}
      \SetKwInOut{Input}{Input}
      \SetKwInOut{Output}{Output}
      \Input{A DES $G=(Q,\Sigma,\delta,I)$, $Q_S,Q_{NS}\subseteq Q$, $\Sigma_o\subseteq \Sigma$, and K~$\in\mathbb{N}_\infty$.}
      \Output{{\tt true} if and only if $G$ is K-SO w.r.t. $Q_S$, $Q_{NS}$, and $P\colon \Sigma^*\to\Sigma_o^*$}
      \hrulealg
      Set $Y := \emptyset$\;
      Compute the observer $G^{obs}$ of $G$\;
      Compute the projected automaton $P(G)$ of $G$\;
      \For{every reachable state $X$ of $G^{obs}$}{
        \For{every state $x\in X\cap Q_S$}{
          add state $(x,X \cap Q_{NS})$ to set $Y$\label{line6}
        }
      }
      Compute the product automaton $\C = P(G) \times G^{obs}$ with the states of $Y$ as initial states\;
      Use BFS to mark states of $\C$ reachable from $Y$ in at most $K$ steps\;
      \lIf {$\C$ contains a marked state of the form $(q,\emptyset)$}{
        \Return {\tt false} {\bf else} \Return {\tt true} }
  \end{algorithm}

  Our algorithm is described as Algorithm~\ref{alg1}. Intuitively, we compute the observer of $G$ (on demand also some of its non-reachable states), the projected NFA of $G$, and their product automaton $\C$. For every reachable state $X$ of the observer, we make the states $(x,X\cap Q_{NS})$, where $x$ is a secret state from $X$ and the second component is a set of all non-secret states from $X$, initial in $\C$. Then we use Breadth-First Search (BFS)~\cite{IntroToAlg} to search $\C$ and to mark all states of $\C$ that are reachable in at most K steps from an initial state. This is done as follows.
  First, we push all initial states of $\C$ to the queue, followed by pushing number 0 (in binary) to the queue. After processing the initial states, we remove 0 from and push 1 to the queue. At this point, the queue contains all states of $\C$ reachable from the initial states in one step, followed by number 1. The algorithm proceeds this way until it has either visited all states of $\C$ or the number stored in the queue is K. All and only visited states of $\C$ are marked. We show in Theorem~\ref{thm-correctAlg1} that $G$ is K-SO if and only if no state of the form $(\cdot,\emptyset)$ is marked in $\C$.

  Before that, we illustrate Algorithm~\ref{alg1} by considering one-step opacity of the DES $G$ depicted in Figure~\ref{fig:inso-cso-ex} where all events are observable, $Q_S=\{2\}$, and $Q_{NS}=\{4\}$. A relevant part of the observer $G^{obs}$ is depicted in the same figure. Since $G$ has no unobservable events, $P(G)=G$. The only reachable state $X=\{2,4\}$ of $G^{obs}$ intersecting $Q_S$ results in $Y=\{(2,\{4\})\}$. The marked part of $\C_1 = P(G) \times G^{obs}$ reachable from $Y$ in at most one step is depicted in Figure~\ref{ex:alg1-obs}. Since state $(3,\emptyset)$ is marked in $\C_1$, $G$ is not one-step opaque; indeed, observing $ab$, the intruder reveals that $G$ was in a secret state.

  \begin{figure}
    \centering
    \includegraphics[align=c,scale=.8]{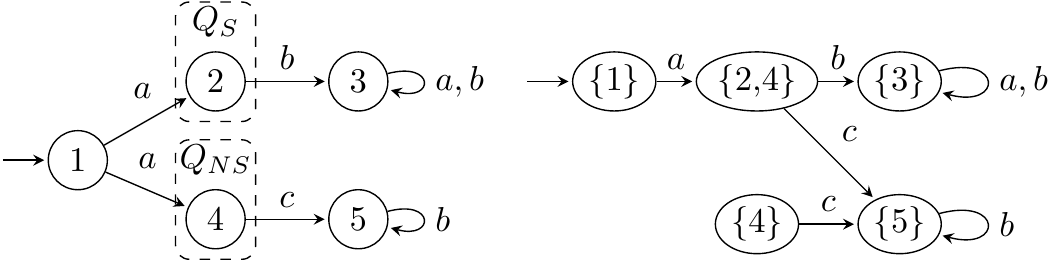}
    \caption{A DES $G$ (left) and a relevant part of the observer $G^{obs}$ (right).}
    \label{fig:inso-cso-ex}
  \end{figure}

  \begin{figure}
    \centering
    \includegraphics[scale=.7]{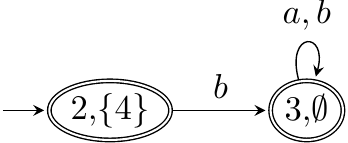}\qquad
    \includegraphics[scale=.7]{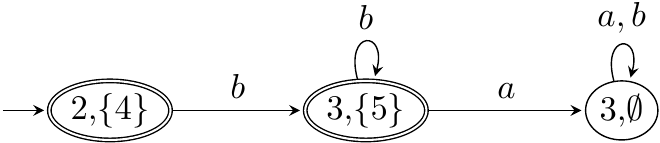}
    \caption{The reachable parts of $\C_1$ (left) and $\C_2$ (right).}
    \label{ex:alg1-obs}
    \label{ex:alg1-obs3}
  \end{figure}

  We now assume that event $c$ is unobservable, and denote $G$ with $a,b$ observable, $c$ unobservable, $Q_S=\{2\}$, $Q_{NS}=\{4\}$ by $\tilde{G}$. The automaton $P(\tilde{G})$ and a relevant part of $\tilde G^{obs}$ are depicted in Figure~\ref{ex:alg1-obs2}. The only reachable state $X=\{2,4,5\}$ of $\tilde{G}^{obs}$ intersecting $Q_S$ results in $Y=\{(2,\{4\})\}$. The marked part of $\C_2 = P(\tilde{G}) \times \tilde{G}^{obs}$ is depicted in Figure~\ref{ex:alg1-obs3}. Since no state of the form $(\cdot,\emptyset)$ is marked in $\C_2$, $\tilde{G}$ is one-step opaque.

  \begin{figure}
    \centering
    \includegraphics[scale=.8]{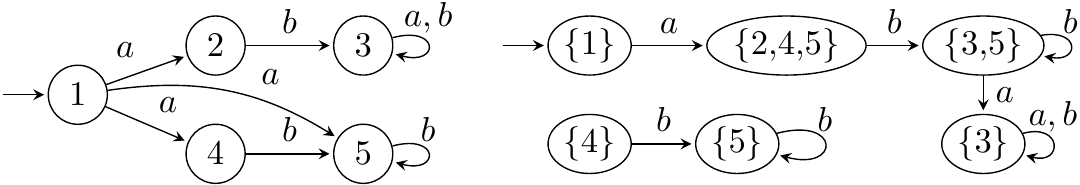}
    \caption{Projected automaton $P(\tilde{G})$ (left) and a relevant part of $\tilde G^{obs}$ (right).}
    \label{ex:alg1-obs2}
  \end{figure}

  We now prove the correctness of our algorithm.
  \begin{thm}\label{thm-correctAlg1}
    A DES $G$ is K-SO w.r.t. $Q_S$, $Q_{NS}$, and $P$ if and only if Algorithm~\ref{alg1} returns {\tt true}.
  \end{thm}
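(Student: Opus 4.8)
The plan is to prove the equivalence by showing its contrapositive on both sides, i.e. that the algorithm returns \texttt{false} (equivalently, marks a product state of the form $(q,\emptyset)$) if and only if K-SO is violated: there is a string $st\in L(G)$ with $|P(t)|\le K$ and $\delta(\delta(I,s)\cap Q_S,t)\neq\emptyset$ for which no matching non-secret string $s't'$ exists. The whole argument is a tight dictionary between the combinatorial objects built by the algorithm and the quantified strings in the definition.

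First I would record the standard subset-construction semantics: a subset $X\subseteq Q$ is a reachable state of $G^{obs}$ exactly when $X=\delta(I,P^{-1}(w))$ for some $w\in\Sigma_o^*$, and running the subset construction from an arbitrary subset $R_0$ on $v\in\Sigma_o^*$ yields $\gamma(R_0,v)=\delta(R_0,P^{-1}(v))$. I would likewise recall the projected-automaton identity $\gamma(q,v)=\delta(q,P^{-1}(v))$, so that $\gamma(q,v)\neq\emptyset$ iff some string projecting to $v$ is feasible from $q$; a nonempty endpoint then certifies a concrete path, since every intermediate image is nonempty. With these, line~\ref{line6} builds $Y$ so that $(x,R_0)\in Y$ precisely when there is a reachable observer state $X$ with $x\in X\cap Q_S$ and $R_0=X\cap Q_{NS}$; intuitively $x$ is a reachable secret state after some observation $w$, and $R_0$ is the set of all non-secret states consistent with $w$.

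Next I would analyse the product $\C=P(G)\times G^{obs}$. Reading an observable string $v$ from an initial state $(x,X\cap Q_{NS})$ leads, because the observer component is deterministic, exactly to the states $(q,R)$ with $q\in\gamma(x,v)$ and $R=\gamma(X\cap Q_{NS},v)=\delta(X\cap Q_{NS},P^{-1}(v))$, along a path of length $|v|$, so the BFS marks $(q,R)$ iff it is reachable by some $v$ with $|v|\le K$. The key lemma I would isolate is that $\gamma(X\cap Q_{NS},v)\neq\emptyset$ holds if and only if there exist $s',t'$ with $P(s')=w$, $P(t')=v$, $s't'\in L(G)$, and $\delta(\delta(I,s')\cap Q_{NS},t')\neq\emptyset$; this follows from $X\cap Q_{NS}=\bigcup_{P(s')=w}(\delta(I,s')\cap Q_{NS})$ together with the projected-automaton identity. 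Dually, a marked state $(q,\emptyset)$ therefore encodes exactly a secret witness ($q\in\gamma(x,v)$, so $x$ continues along some $t$ with $P(t)=v$) together with the non-existence of any non-secret alternative ($\gamma(X\cap Q_{NS},v)=\emptyset$).

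With these correspondences the two implications are routine. If $(q,\emptyset)$ is marked, I pick the witnessing initial state $(x,X\cap Q_{NS})$ and label $v$ with $|v|\le K$, reconstruct $w$ reaching $X$, a string $s$ with $P(s)=w$ and $x\in\delta(I,s)\cap Q_S$, and $t$ with $P(t)=v$ and $q\in\delta(x,t)$, and conclude that $st$ violates K-SO; conversely, a violating $st$ yields such $X$, $x$, and $v=P(t)$, whence the corresponding $(q,\emptyset)$ is reachable within $|v|\le K$ steps and marked. The case $K=\infty$ is subsumed because $\C$ has finitely many states (at most $n2^n$), so the unbounded BFS terminates and ``marked'' coincides with ``reachable by some finite $v$.'' I expect the main obstacle to be the bookkeeping around the empty observer state: the second component starts at $X\cap Q_{NS}$, which need not be an $I$-reachable observer state, so the argument must treat the subset construction as generating these auxiliary states on demand and must ensure the product retains a transition into $(q,\emptyset)$ whenever the $P(G)$-component survives even though the observer component has died. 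Verifying that this empty-set sink is handled consistently---so that reaching $(q,\emptyset)$ genuinely certifies a dead non-secret estimate rather than an artifact of the construction---is the delicate step.
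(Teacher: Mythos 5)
Your proposal is correct and follows essentially the same route as the paper's proof: it establishes the same correspondence between the initial states $(x,X\cap Q_{NS})$ of $\C$ and reachable observer states containing a secret state, tracks the secret continuation in the $P(G)$-component and the non-secret estimate in the observer component, and identifies a marked state $(q,\emptyset)$ with a violation of K-SO within $|P(t)|\le K$ observable steps. Your explicit key lemma ($\gamma(X\cap Q_{NS},v)\neq\emptyset$ iff a matching non-secret $s't'$ exists) and your handling of the empty-estimate case are exactly what the paper does inline via its case split (i)/(ii).
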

  \begin{proof}
    If $G=(Q,\Sigma,\delta,I)$ is not K-SO, then there exists $st\in L(G)$ such that $|P(t)|\le K$, $\delta(\delta(I,s)\cap Q_S, t) \neq \emptyset$, and $\delta(\delta(I,P^{-1}P(s))\cap Q_{NS},P^{-1}P(t)) = \emptyset$. We have two cases.
    (i) If $\delta(I,P^{-1}P(s))\cap Q_{NS} = \emptyset$, then $G$ is not K-SO. The algorithm detects this situation by $X=\delta(I,P^{-1}P(s))$, since there is $q\in X\cap Q_{S} \neq \emptyset$ and $X\cap Q_{NS} = \emptyset$, which results in adding $(q,\emptyset)$ to $Y$ in line~\ref{line6}.
    (ii) If $\delta(I,P^{-1}P(s))\cap Q_{NS} = Z \neq \emptyset$, then the pairs $(\delta(I,P^{-1}P(s))\cap Q_S)\times \{Z\}$ are added to $Y$. Since $\delta(\delta(I,s)\cap Q_S, t) \neq \emptyset$, there is $(z,Z)\in  Y$ such that $P(t)$ leads the automaton $P(G)$ from state $z$ to a state $q$. However, $\delta(Z,P^{-1}P(t))=\emptyset$ implies that $P(t)$ leads the observer of $G$ from state $Z$ to state $\emptyset$, and hence $(q,\emptyset)$ is reachable in $\C$ from a state of $Y$ in at most $|P(t)|\le K$ steps.

    On the other hand, if $G$ is K-SO, we show that no state of the form $(q,\emptyset)$ is reachable in $\C$ from $Y$ in at most K steps. For the sake of contradiction, assume that a state $(q,\emptyset)$ is marked in $\C$. Then, there must be a string $s$ such that $\delta(I,P^{-1}P(s))=X$ in $G$, that is, $P(s)$ reaches state $X$ in the observer of $G$, and there is $z \in X \cap Q_S$, $X \cap Q_{NS} = Z$, $(z,Z)\in Y$, and state $(q,\emptyset)$ is reached from state $(z,Z)$ in $\C$ by a string $w\in \Sigma_o^*$ of length at most K. In particular, there is $t\in P^{-1}(w)$ moving $G$ from state $z$ to state $q$. But then $q\in \delta(\delta(I,s)\cap Q_S,t)\neq\emptyset$, and $\delta(\delta(I,P^{-1}P(s))\cap Q_{NS},P^{-1}(w)) = \delta(Z,P^{-1}(w)) = \emptyset$, which means that $G$ is not K-SO---a contradiction.
  \end{proof}

  Finally, we discuss the complexity of our algorithm.
  \begin{thm}
    The space and time complexity of Algorithm~\ref{alg1} is resp. $O(n2^n)$ and $O((n + m)2^n)$, where $n$ is the number of states of $G$ and $m$ is the number of transitions of $P(G)$. Further, $m \le \ell n^2$, where $\ell$ is the number of observable events.
  \end{thm}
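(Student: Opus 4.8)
The plan is to bound the cost of Algorithm~\ref{alg1} line by line and then sum the contributions, using the two size bounds that drive everything: the observer $G^{obs}$ has at most $2^n$ states, and the product $\C$ has at most $n2^n$ states. First I would dispose of the easy algebraic claim: since $P(G)$ has $n$ states and $\ell$ observable events, its transition function assigns to each of the $n\ell$ pairs $(q,a)$ a subset of $Q$ of size at most $n$, whence $m \le \ell n^2$.

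For the construction steps I would argue as follows. The projected automaton $P(G)$ is built in polynomial time by the construction recalled in the preliminaries, which is dominated by the later steps. The observer is obtained by the subset construction from $P(G)$; it has at most $2^n$ states, and computing all successors of a state $X$ amounts to unioning the $P(G)$-transitions out of the states in $X$, so the total transition-processing work over all observer states is $O(m2^n)$, plus $O(n)$ per state to represent the subset, giving $O((n+m)2^n)$. Building $Y$ scans every reachable observer state $X$ and, for each $x \in X \cap Q_S$, inserts the pair $(x, X\cap Q_{NS})$; this costs $O(n2^n)$ since there are at most $2^n$ states and at most $n$ secret states in each. The product $\C = P(G)\times G^{obs}$ has at most $n\cdot 2^n$ states (a $P(G)$-state paired with an observer-state) and, because the observer is deterministic, at most $m\cdot 2^n$ transitions, one for each transition of $P(G)$ combined with the unique observer move on the same symbol.

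Next I would analyze the BFS. Its running time is linear in the size of the explored part of $\C$, that is $O(n2^n + m2^n) = O((n+m)2^n)$. The crucial observation is that this bound is independent of $K$: even though $K$ may be $\infty$ or exponentially large, $\C$ has at most $n2^n$ states, so no new state can be discovered after that many levels; the binary counter merely lets the search halt as soon as either all reachable states have been visited or the level reaches $K$, while each state and transition is still touched only once. The final scan for a marked state of the form $(q,\emptyset)$ costs $O(n2^n)$. Summing the contributions yields time $O((n+m)2^n)$, and the space is dominated by the number of states of the observer and the product, giving $O(n2^n)$.

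The main obstacle I anticipate is making the accounting for the observer and product constructions tight enough to avoid a spurious factor of $\ell$ and to justify the stated $O(n2^n)$ space: this requires charging the transition-processing work to the $m2^n$ transitions of $\C$ (each transition of $P(G)$ is reused in at most $2^n$ product transitions) and measuring storage at the granularity of states, i.e.\ subsets of $Q$, rather than bits. I would also note explicitly that generating non-reachable observer states on demand does not change these worst-case bounds, since the total number of observer and product states is still capped by $2^n$ and $n2^n$, respectively.
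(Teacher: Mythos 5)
Your proposal is correct and follows essentially the same line-by-line accounting as the paper: the dominant costs are the product automaton $\C$ with $O(n2^n)$ states and $O(m2^n)$ transitions, plus a BFS that is linear in $\C$ and independent of K. You give somewhat more detail than the paper on the observer's cost and the space bound (the paper simply charges $O(\ell 2^n)$ for the observer and absorbs it via $m\ge\ell$), but the decomposition and key bounds are identical.
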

  \begin{proof}
    Computing the observer and the projected NFA, lines~2 and~3, takes time $O(\ell 2^n)$ and $O(m+n)$, resp. The cycle on lines~4--6 takes time $O(n 2^n)$. Constructing $\C$, line 7, takes time $O(n 2^n + m 2^n)$, where $O(n 2^n)$ is the number of states and $O(m 2^n)$ is the number of transitions of $\C$. The BFS takes time linear in $\C$, and the condition of line 9 can be processed during the BFS. Since $m\ge \ell$, the proof is complete.
  \end{proof}

\section{Relation to other Opacity Notions}
  We now design polynomial-time transformations of K-SO to CSO, and vice versa. For the transformations of CSO to other opacity notions, we refer the reader to the literature~\cite{WuLafortune2013,BalunMasopust2021}.
  Compared with the transformations of Wu and Lafortune~\cite{WuLafortune2013} and Balun and Masopust~\cite{BalunMasopust2021}, which use neutral states and are polynomial in the system size and the value of K, our new transformations do not use neutral states and are polynomial in the size of the system and the encoding (logarithm) of K.

  The need for the new transformations comes from the facts that (i) a large value of K makes the existing transformations infeasible, and (ii) the meaning of neutral states is unclear or questionable. Although we allow neutral states to appear in the systems, we neither use them nor create them in the transformations; using neutral states would result in transformations that do not work when the neutral states are not allowed~\cite{BalunMasopust2021}.

  Our motivation for the transformations is two-fold. First, they provide a deeper understanding of differences/similarities of the notions: we see that one secret state is sufficient for CSO, and we learn how to transform K-SO to K'-SO for any K and K'.
  Second, they are a tool to transfer complexity results among the notions: we get that deciding CSO for systems with a single secret state is as hard as deciding CSO for general systems, or that the existing complexity results for K-SO (and hence also INSO and CSO)~\cite[Table~1]{BalunMasopust2021} hold for systems without neutral states and K given as part of the input.

  To simplify the presentation of this section, some auxiliary technical results are moved to the appendices.

\subsection{Transforming CSO to K-SO for any K~$\in\mathbb{N}_\infty$}\label{redCSOtoKSO}
  The problem of deciding current-state opacity consists of a DES $G=(Q, \Sigma, \delta, I)$, secret states $Q_{S}$, non-secret states $Q_{NS}$, and projection $P\colon\Sigma^*\to \Sigma_o^*$.
  From $G$, we construct a DES $G'=(Q\cup\{q_{s},q_{ns}\}, \Sigma\cup\{@\}, \delta', I)$ over the alphabet $\Sigma\cup\{@\}$, where $@$ is a new observable event, by adding two new states $q_s$ and $q_{ns}$. The transition function $\delta'$ of $G'$ is initialized as the transitions function $\delta$ of $G$ and further extended as follows, see Figure~\ref{fig:cso-inso} for an illustration:
  \begin{enumerate}
    \item for every state $q\in Q_{S}$, we add $(q,@,q_{s})$ to $\delta'$;
    \item for every state $q\in Q_{NS}$, we add $(q,@,q_{ns})$ to $\delta'$.
  \end{enumerate}

  We define $P'\colon (\Sigma\cup\{@\})^* \to (\Sigma_o\cup\{@\})^*$, secret states $Q_S'=\{q_{s}\}$, and non-secret states $Q_{NS}'=Q_S\cup Q_{NS}\cup\{q_{ns}\}$.

  \begin{figure}
    \centering
    \includegraphics[align=c,scale=.8]{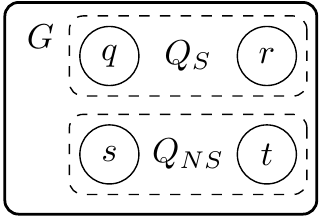}
    $\Longrightarrow$
    \includegraphics[align=c,scale=.8]{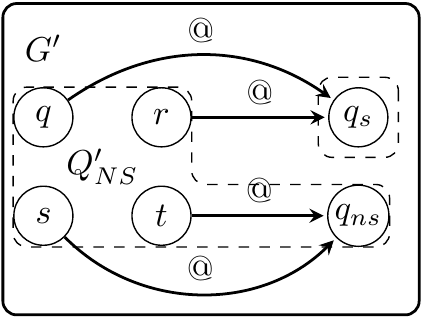}
    \caption{Transforming CSO to K-SO for any K~$\in\mathbb{N}_\infty$.}
    \label{fig:cso-inso}
  \end{figure}

  We now prove the correctness of the transformation.
  \begin{thm}\label{thm:cso-inso}
    The DES $G$ is CSO w.r.t. $Q_S$, $Q_{NS}$, $P$ iff the DES $G'$ is K-SO w.r.t. $Q_S'$, $Q_{NS}'$, $P'$.
  \end{thm}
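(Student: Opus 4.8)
The plan is to prove both directions simultaneously by characterizing exactly when each system fails its property and observing that the two failure conditions coincide verbatim; the key is to exploit the structure of the two added states so that the value of K drops out. The first thing I would establish is a structural observation about $G'$: the states $q_{s}$ and $q_{ns}$ are sinks (no transitions of $\delta'$ leave them), they are the only states reachable by the new event $@$, and $q_{s}$ (resp.\ $q_{ns}$) is entered precisely from the states of $Q_{S}$ (resp.\ $Q_{NS}$). Consequently, in any string of $L(G')$ the symbol $@$ can occur at most once and only as its last symbol, and $q_{s}$ is the unique secret state of $G'$ since $Q_{S}'=\{q_{s}\}$.

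Next I would use this to pin down when $G'$ violates K-SO. Because $Q_{S}'=\{q_{s}\}$ and $q_{s}$ is a sink, any witnessing $s_1 t_1 \in L(G')$ must satisfy $q_{s}\in\delta'(I,s_1)$, which forces $s_1=s@$ for some $s\in\Sigma^{*}$ with $\delta(I,s)\cap Q_{S}\neq\emptyset$, and forces $t_1=\varepsilon$ since nothing leaves $q_{s}$. Hence $|P'(t_1)|=0\le K$ for every $K\in\mathbb{N}_\infty$, which is exactly where the independence of K comes from. A matching non-secret counterpart $s_1' t_1'$ must obey $P'(s_1')=P(s)@$ and $P'(t_1')=\varepsilon$; since $@$ is necessarily the last symbol and leads to a sink, this means $s_1'=s'@$ with $P(s')=P(s)$ and $t_1'=\varepsilon$. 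A short computation then gives $\delta'(I,s'@)\cap Q_{NS}'=\{q_{ns}\}$ when $\delta(I,s')\cap Q_{NS}\neq\emptyset$ and $\emptyset$ otherwise, using $q_{s}\notin Q_{NS}'$ while $q_{ns}\in Q_{NS}'$. Therefore $G'$ fails K-SO iff there is $s$ with $\delta(I,s)\cap Q_{S}\neq\emptyset$ such that no $s'$ with $P(s')=P(s)$ satisfies $\delta(I,s')\cap Q_{NS}\neq\emptyset$.

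Finally I would show that this condition is precisely the negation of CSO for $G$. Reading the definition of K-SO at $K=0$, a CSO violation is a string $st\in L(G)$ with $P(t)=\varepsilon$ and $\delta(\delta(I,s)\cap Q_{S},t)\neq\emptyset$ admitting no non-secret counterpart. Here I would argue that the trailing unobservable factors may be taken empty: taking $t=\varepsilon$ still witnesses $\delta(I,s)\cap Q_{S}\neq\emptyset$, and since $\delta(\emptyset,t')=\emptyset$, the non-existence of a counterpart reduces to the statement that no $s'$ with $P(s')=P(s)$ reaches $Q_{NS}$. This is exactly the condition derived in the previous step, so the two failure conditions agree and the equivalence follows for all $K\in\mathbb{N}_\infty$ at once. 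I expect the main obstacle to be this bookkeeping of the quantifiers over $s,t$ and $s',t'$ in the definition: rigorously verifying that the unobservable tails are harmless and that $@$ must be the final event of every relevant string. Once these are settled, the two violation conditions coincide and the theorem is immediate.
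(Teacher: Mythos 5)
Your proof is correct and takes essentially the same route as the paper's: both arguments rest on the observations that $@$ must be the last symbol of any string reaching the unique secret state $q_{s}$, that $q_{s}$ is a sink so the continuation $t$ is forced to be $\varepsilon$ (which is exactly why K drops out), and that CSO of $G$ supplies the non-secret counterpart via $q_{ns}$. The only presentational difference is that you match the two failure conditions rather than proving the implications separately, and you explicitly justify reducing the $0$-SO definition to the usual state-estimate form of CSO, a step the paper uses implicitly.
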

  \begin{proof}
    Assume that $G$ is not CSO. Then there is a string $w\in\Sigma^*$ that leads $G$ to a secret state, and every string that looks the same as $w$ leads $G$ out of non-secret states. Then, in $G'$, generating the string $w@$ ends up in the secret state $q_s \in \delta'(I,w@)\cap Q_{S}' \neq \emptyset$. Since generating any string that looks the same as $w$ leads $G$ to a state out of non-secret states, we have that $\delta'(I,P'^{-1}P'(w@))\cap Q_{NS}' = \emptyset$. Therefore, $G'$ is not CSO, neither K-SO for any K~$\in\mathbb{N}_\infty$.

    Now, assume that $G$ is CSO, and let $st \in L(G')$ be such that $s$ leads $G'$ to a secret state and $t$ may be generated from this secret state in $G'$, formally $\delta'(\delta'(I, s)\cap Q_S', t) \neq \emptyset$. Then, $s=s_1@$ where $s_1$ does not contain $@$, and $t=\eps$.
    By construction, generating $s_1$ in $G$ ends up in a secret state. Since $G$ is CSO, there is a string $s_1'\in P^{-1}P(s_1)$ looking the same as $s_1$ such that generating $s_1'$ in $G$ ends up in a non-secret state. Then, generating $s_1'@$ in $G'$ ends up in a non-secret state, and hence taking $s'=s_1'@$ gives that $P'(s't) = P'(s') = P'(s) = P'(st)$ and $\delta' (\delta'(I,s')\cap Q_{NS}', t) \neq \emptyset$, which shows that $G'$ is K-SO for any K~$\in\mathbb{N}_\infty$.
  \end{proof}

  The transformation can be done in polynomial time, does not depend on K, and does not use neutral states. However, it introduces a new observable event.

  To decrease the number of observable events in $G'$, we may notice that $G'$ is K-SO, for any K~$\in \mathbb{N}_\infty$, if and only if $G'$ is CSO, since there are no transitions from the secret state $q_s$ of $G'$.
  Taking an encoding $e'\colon \Sigma_o \to \{0,1\}^k$ for a suitable $k$ (see Appendix~\ref{appA}), and defining $e(a) = 0 e'(a)$, for $a\in\Sigma_o$, and $e(@)=1^{k+1}$, we get an encoding $e\colon \Sigma_o\cup\{@\} \to \{0,1\}^{k+1}$ that encodes observable events of $\Sigma$ as binary sequences starting with $0$, and $@$ as a sequence of $1$'s.
  Applying the construction of Appendix~\ref{appA} to $G'$ and $e$ results in $G''$ with two observable events, $0$ and $1$, the only secret state $q_s$, and no transitions from the secret state $q_s$, $G''$ is K-SO if and only if $G''$ is CSO, which is if and only if $G'$ is CSO by Lemma~\ref{binEnc}.
  
  Notice that the transformation reduces CSO to K-SO with a single secret state, and hence we have the following corollary.
  \begin{cor}
    For any K~$\in\mathbb{N}_\infty$, deciding K-step opacity of a system with a single secret state and two or more observable events is \PSpace-complete.
    \qed
  \end{cor}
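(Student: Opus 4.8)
The plan is to establish the two halves of \PSpace-completeness separately, with the hardness half carrying all the content through the reduction just assembled. For membership I would simply observe that deciding K-SO for a system with a single secret state and two observable events is the general K-SO verification problem confined to a subclass of inputs; since K-SO is in \PSpace\ in general~\cite{BalunMasopust2021}, running that same decision procedure on these restricted instances already settles membership, for every K~$\in\mathbb{N}_\infty$. No dedicated algorithm is required.

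For hardness I would reduce from current-state opacity, which is itself \PSpace-complete~\cite{BalunMasopust2021}. The reduction is exactly the composition set up above: from an arbitrary CSO instance $G$ I first build $G'$ by the construction of Section~\ref{redCSOtoKSO}, and then apply the binary encoding $e$ together with the construction of Appendix~\ref{appA} to obtain $G''$. Both steps are computable in time polynomial in the size of $G$; note that K enters neither construction, because $G''$ has no outgoing transitions from its secret state and is therefore K-SO, for every K, exactly when it is CSO. By design $G''$ has the single secret state $q_s$ and precisely the two observable events $0$ and $1$.

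Correctness then follows by chaining the equivalences already recorded: $G''$ is K-SO iff $G''$ is CSO (since $q_s$ has no successors, current-state and K-step opacity coincide on $G''$), $G''$ is CSO iff $G'$ is CSO (by Lemma~\ref{binEnc}), and $G'$ is CSO iff $G$ is CSO (since Theorem~\ref{thm:cso-inso} gives that $G$ is CSO iff $G'$ is K-SO, and the same ``no successors'' argument applied to $q_s$ in $G'$ gives that $G'$ is K-SO iff $G'$ is CSO). Hence $G$ is CSO if and only if $G''$ is K-SO, for every K~$\in\mathbb{N}_\infty$. Because the target instances have exactly two observable events, they also lie in the class of instances with two or more observable events, so the reduction establishes hardness for that class as well, and \PSpace-completeness follows from membership.

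The step I expect to require the most care is confirming that the composite transformation is genuinely polynomial and preserves both structural restrictions at once. In particular I would verify that the binary encoding inflates the automaton only by a factor polynomial in $\log\ell$ (where $\ell$ is the number of observable events of $G$), that it keeps $q_s$ as the unique secret state without introducing any neutral states, and that the resulting observable alphabet is exactly $\{0,1\}$. These are precisely the places where a careless encoding could silently reintroduce extra observable events or a second secret state and thereby invalidate the claimed restriction.
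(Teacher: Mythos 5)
Your proposal is correct and follows essentially the same route as the paper: the corollary is obtained exactly as you describe, by composing the CSO-to-K-SO construction of Section~\ref{redCSOtoKSO} with the binary encoding of Appendix~\ref{appA}, using that $q_s$ has no outgoing transitions so K-SO and CSO coincide on $G'$ and $G''$, together with Lemma~\ref{binEnc} and the general \PSpace{} upper bound. Your closing caveats (polynomial blow-up, exactly two observable events, a single secret state, no neutral states) are precisely the points the paper relies on, and they all check out since $G'$ has at least three observable events whenever $G$ has at least two.
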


  If $G$ has a single observable event, the previous construction results in $G'$ with two observable events, and the construction of $G''$ does not work because the technique of Appendix~\ref{appA} requires at least three observable events in $G'$. Therefore, we design a direct transformation preserving a single observable event that does not admit neutral states. For systems admitting neutral states, we refer to our recent work~\cite{BalunMasopust2021}.

  The problem of deciding CSO for systems with a single observable event consists of a DES $G=(Q,\Sigma,\delta,I)$ with $\Sigma_o=\{a\}$, secret states $Q_{S}$, non-secret states $Q_{NS} = Q \setminus Q_{S}$, and projection $P\colon\Sigma^*\to \{a\}^*$.
  From $G$, we construct a DES $G'=(Q\cup\{q_0^\star,q_1^\star,q_2^\star\}, \Sigma\cup\{u\}, \delta', I)$ by adding a new unobservable event $u$ and three new states $q_0^\star,q_1^\star,q_2^\star$.
  The transition function $\delta'$ is initialized as $\delta$ and further extend by adding, for each $q\in Q_{NS}$, the transition $(q,u,q_1^\star)$, and by adding three transitions $(q_1^\star,a,q_2^\star)$, $(q_0^\star,a,q_0^\star)$, and $(q_2^\star,a,q_2^\star)$; see Figure~\ref{fig:cso-inso-single} for an illustration.
  Now, we determine (in linear time) whether the language $L(G)$ is finite. If so, we denote by
  $
    m=\max\{|P(w)| \mid w\in L(G)\}
  $
  the maximal number of observable events in the strings of $L(G)$, and by
  $
    Q_{max}=\{q\in Q\mid q\in\delta(I,P^{-1}(a^m))\}
  $
  the states reachable by the strings with the maximal number of observable events. Finally, we add the transition $(q,a,q_0^\star)$, for every $q\in Q_{max}$, to $\delta'$ and define $P'\colon (\Sigma\cup\{u\})^* \to \Sigma_o^*$, secret states $Q_S'=Q_S\cup\{q_2^\star\}$, and non-secret states $Q_{NS}'=Q_{NS}\cup\{q_0^\star,q_1^\star\}$, see Figure~\ref{fig:cso-inso-single}.

  \begin{figure}
    \centering
    \includegraphics[align=c,scale=.8]{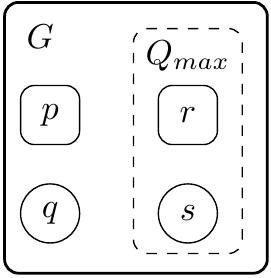}
    $\Longrightarrow$
    \includegraphics[align=c,scale=.8]{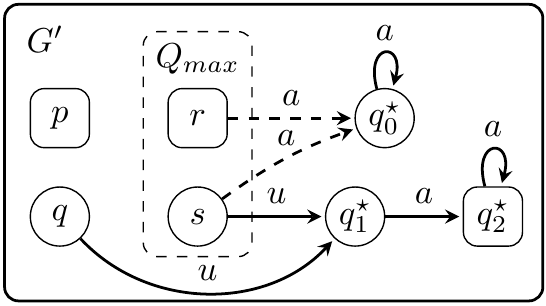}
    \caption{Transforming CSO to K-SO with a single observable event. Secret states are squared; the dashed transitions are in $G'$ only if $L(G)$ is finite.}
    \label{fig:cso-inso-single}
  \end{figure}

  We first formulate a simple, but important, observation.
  \begin{lem}\label{lemmaA}
    If $G$ is CSO, then, for every $w\in L(G)$, there exists $w'\in P^{-1}P(w)$ such that $\delta(I,w')\cap Q_{NS} \neq \emptyset$.
  \end{lem}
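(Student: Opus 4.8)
The plan is to prove Lemma~\ref{lemmaA} by a short case distinction on whether the given string $w$ already reaches a non-secret state, invoking current-state opacity only when it does not. Recall that in this subsection we work with $Q_{NS}=Q\setminus Q_S$, so there are no neutral states and every reachable set of states is partitioned between $Q_S$ and $Q_{NS}$; this fact will do the essential work.

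First I would fix $w\in L(G)$, which guarantees $\delta(I,w)\neq\emptyset$. If $\delta(I,w)\cap Q_{NS}\neq\emptyset$, then $w'=w$ already witnesses the claim, since trivially $w\in P^{-1}P(w)$. It remains to treat the case $\delta(I,w)\cap Q_{NS}=\emptyset$. Here the absence of neutral states, together with $\delta(I,w)\neq\emptyset$, forces $\delta(I,w)\subseteq Q_S$, so in particular $\delta(I,w)\cap Q_S\neq\emptyset$. Now I would apply the definition of CSO, i.e.\ $0$-step opacity, to the decomposition $s=w$ and $t=\eps$: indeed $st=w\in L(G)$, $|P(\eps)|=0$, and $\delta(\delta(I,w)\cap Q_S,\eps)=\delta(I,w)\cap Q_S\neq\emptyset$. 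Opacity then yields $s't'\in L(G)$ with $P(s')=P(w)$, $P(t')=\eps$, and $\delta(\delta(I,s')\cap Q_{NS},t')\neq\emptyset$. This last inequality forces $\delta(I,s')\cap Q_{NS}\neq\emptyset$, so setting $w'=s'$ gives a string with $P(w')=P(w)$, hence $w'\in P^{-1}P(w)$, reaching a non-secret state, as required.

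The argument is essentially routine; the only point demanding care is the degenerate case, where one must justify that failing to reach $Q_{NS}$ implies reaching $Q_S$. This step is exactly where the hypothesis $Q_{NS}=Q\setminus Q_S$ is used, in combination with $\delta(I,w)\neq\emptyset$. Without the no-neutral-states assumption the claim would fail, since a string reaching only neutral states would satisfy neither the hypothesis $\delta(I,w)\cap Q_S\neq\emptyset$ needed to invoke opacity, nor the conclusion $\delta(I,w')\cap Q_{NS}\neq\emptyset$.
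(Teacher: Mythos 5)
Your proof is correct and follows essentially the same route as the paper: a case split on whether $\delta(I,w)$ already meets $Q_{NS}$, using $Q_{NS}=Q\setminus Q_S$ to conclude $\delta(I,w)\subseteq Q_S$ in the remaining case, and then invoking CSO (with $t=\eps$) to produce the witness $w'$. Your write-up merely spells out the degenerate-case justification that the paper leaves implicit.
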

  \begin{proof}
    For $w\in L(G)$, either $\delta(I,w)\cap Q_{NS} \neq \emptyset$ or $\delta(I,w)\subseteq Q_{S}$. In the latter case, CSO of $G$ implies that there is $w'\in P^{-1}P(w)$ such that $\delta(I,w')\cap Q_{NS} \neq \emptyset$.
  \end{proof}

  We now prove the correctness of the construction.
  \begin{thm}
    The DES $G$ with a single observable event is CSO w.r.t. $Q_S$, $Q_{NS}$, $P$ iff $G'$ is K-SO w.r.t. $Q'_S$, $Q'_{NS}$, $P'$.
  \end{thm}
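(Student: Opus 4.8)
The plan is to prove both implications, and the unifying observation is that, since $a$ is the only observable event, two computations of $G'$ look alike to the intruder exactly when they contain the same number of $a$'s; hence a K-SO witness only has to reproduce an observable count, end in a non-secret state, and then survive the prescribed number of further observable steps. I would first record two structural facts obtained by noting that $u$ is the only way to leave $Q$ and that $q_0^\star,q_1^\star,q_2^\star$ have no transitions back into $Q$: a computation reaching an original state of $Q_S$ uses a string $s\in\Sigma^*$, whereas a computation reaching $q_2^\star$ uses a string of the form $\alpha\,u\,a^j$ with $\alpha\in\Sigma^*$, $\delta(I,\alpha)\cap Q_{NS}\neq\emptyset$, and $j\ge 1$. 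I would also use that $L(G)$ is generated and therefore prefix-closed.

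For the direction ``$G$ not CSO $\Rightarrow G'$ not K-SO'', I would pick $w\in L(G)$ with $\delta(I,w)\cap Q_S\neq\emptyset$ and $\delta(I,P^{-1}P(w))\cap Q_{NS}=\emptyset$, write $P(w)=a^k$, and take the offending computation $s=w$, $t=\eps$ of $G'$, which reaches $Q_S\subseteq Q_S'$ with $|P'(t)|=0\le K$. It remains to check that no string with $k$ observable events reaches $Q_{NS}'=Q_{NS}\cup\{q_0^\star,q_1^\star\}$: reaching $Q_{NS}$ or $q_1^\star$ would require an $a^k$-realization of $G$ ending in $Q_{NS}$, contradicting the choice of $w$, while $q_0^\star$ is reachable only by $a^{k'}$ with $k'\ge m+1$, and $k\le m$.

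The substantial direction is ``$G$ CSO $\Rightarrow G'$ K-SO''. Given $st\in L(G')$ with $\delta'(\delta'(I,s)\cap Q_S',t)\neq\emptyset$ and $|P'(t)|\le K$, I would split on the secret state that $s$ reaches. If it lies in $Q_S$, then $s\in\Sigma^*$, Lemma~\ref{lemmaA} supplies $s'\in P^{-1}P(s)$ with $\delta(I,s')\cap Q_{NS}\neq\emptyset$, and appending $t'=u\,a^{|P'(t)|}$ yields the witness: from the $Q_{NS}$-state, $u$ enters the non-secret $q_1^\star$ and the remaining $a$'s run into the self-loop at $q_2^\star$. If instead $s$ reaches $q_2^\star$, then $t=a^i$ and I put $k'=|P'(s)|$; provided $a^{k'}$ is the projection of some string of $G$ (which, by prefix-closedness, holds for every $k'$ not exceeding the largest observable length), Lemma~\ref{lemmaA} again gives $s'\in L(G)$ with $k'$ observable events ending in $Q_{NS}$, and $t'=u\,a^{i}$ finishes the argument.

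The main obstacle is the last remaining situation, where the required count $k'$ exceeds every observable length realizable in $G$; this is exactly what $Q_{max}$ and $q_0^\star$ are for. I would argue that this can happen only when $L(G)$ is finite, so that $m=\max_{w\in L(G)}|P(w)|$ is defined and $k'\ge m+1$; then a string consisting of any $a^m$-realization reaching $Q_{max}$ followed by $a^{\,k'-m}$ reaches the non-secret self-looping state $q_0^\star$, and $t'=a^{i}$ keeps the computation there. The delicate bookkeeping to get right is precisely this matching of observable lengths: that $q_0^\star$ is reachable for exactly the counts $k'\ge m+1$, that the finite/infinite dichotomy lines up with whether such large $k'$ can arise at all, and that when $L(G)$ is infinite the observable behavior is unbounded, so the preceding subcase always applies.
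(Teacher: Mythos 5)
Your proof takes the same route as the paper's: the same structural observations about how $q_1^\star$ and $q_2^\star$ can be entered, the same case split on whether the offending $s$ ends in a state of $Q_S$ or in $q_2^\star$, the same witnesses $t'=u\,a^{|P'(t)|}$ threaded through $q_1^\star$ into the self-loop at $q_2^\star$, the same use of Lemma~\ref{lemmaA}, and the same converse argument via the bound $|P(w)|\le m$ that keeps $q_0^\star$ unreachable with only $k\le m$ observable events. All of that is correct.

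The one step that does not hold as you state it is the assertion that ``when $L(G)$ is infinite the observable behavior is unbounded,'' which you need in order to dismiss the residual case where $s$ reaches $q_2^\star$ with $k'=|P'(s)|$ exceeding every observable length realizable in $G$ while $L(G)$ is infinite. The implication is false: if $G$ has an unobservable cycle (e.g., an unobservable self-loop on a non-secret state reachable after one $a$), then $L(G)$ is infinite but $P(L(G))$ is finite; the $q_0^\star$ gadget is then absent, yet $k'$ can still exceed $\max\{|P(w)| : w\in L(G)\}$ because of the $a$-self-loop at $q_2^\star$, and then no string with $k'$ observable events reaches $Q_{NS}'$, so $G'$ fails to be CSO even though $G$ is. In fairness, the paper's own proof makes the identical unjustified move (``If $L(G)$ is infinite, there is $s''\in L(G)$ with $P(s'')=P'(s)$''), so this is a defect of the construction as written rather than of your argument alone; the repair is to run the finite/infinite dichotomy --- and the addition of the $(q,a,q_0^\star)$ transitions --- on finiteness of $P(L(G))$ rather than of $L(G)$, exactly as the paper does in its subsequent single-observable-event transformation of K-SO to CSO. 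With that adjustment, your argument goes through verbatim.
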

  \begin{proof}
    Assume that $G$ is CSO. We show that if $st \in L(G')$ with $|P'(t)|\le$~K and $s$ leads $G'$ to a secret state from which $t$ can be generated, then there are strings $s'$ and $t'$ with $P'(s')=P'(s)$ and $P'(t')=P'(t)$ such that $s'$ leads $G'$ to a non-secret state from which $t'$ can be generated.

    If $s$ leads $G'$ to a secret state $q_s$, which is also a state of $G$, then CSO of $G$ implies that there is $s'$ with $P(s')=P(s)$ leading $G$, and hence $G'$, to a non-secret state, $q_{ns}$. Therefore, for any extension $t$ of $s$ from state $q_s$, $t'=ua^{|P'(t)|}$ is an extension of $s'$ from $q_{ns}$ with $P'(t')=P'(t)$.

    If $s$ leads $G'$ to state $q_2^\star$, we have two cases.
    If $L(G)$ is infinite, there is $s''\in L(G)$ with $P(s'')=P'(s)$. By Lemma~\ref{lemmaA}, there is $s'\in L(G)$ with $P(s')=P(s'')$ that leads $G$, and hence $G'$, to a non-secret state, say $q_{ns}$. Then, $t'= ua^{|P'(t)|}$ is an extension of $s'$ from $q_{ns}$ with $P'(t')=P'(t)$.
    If $L(G)$ is finite, then $s=s_1us_2$, and $s''=s_1 s_2$ satisfies $P(s'')=P'(s)$. If $s''$ leads $G'$ to state $q_0^\star$, any extension $t$ of $s$ from state $q_2^\star$ is an extension of $s'=s''$ from state $q_0^\star$. If $s''\in L(G)$, then, by Lemma~\ref{lemmaA}, there is $s'$ with $P(s')=P(s'')$ leading $G$ to a non-secret state, from which $t'=ua^{|t|}$ can be generated.
    Altogether, $G'$ is K-SO.

    On the other hand, if $G$ is not CSO, there is $w \in L(G)$ such that $\delta(I,P^{-1}P(w))\cap Q_{S}\neq \emptyset$ and $\delta(I,P^{-1}P(w))\cap Q_{NS}= \emptyset$. In particular, $q_1^\star \notin \delta'(I,P'^{-1}P'(w))\cap Q_{NS}'$. Since $w\in L(G)$, $|P(w)|\leq m$, the maximal number of $a$'s in the strings of $L(G)$, and hence $q_0^\star \notin \delta'(I,P'^{-1}P'(w))\cap Q_{NS}'$. Altogether, $\delta'(I,P'^{-1}P'(w))\cap Q_{NS}' = \emptyset$, and therefore $G'$ is not CSO, neither K-SO.
  \end{proof}

\subsection{Transforming K-SO to CSO}\label{redKSOtoCSO}
  The problem of deciding K-step opacity consists of a DES $G=(Q, \Sigma, \delta, I)$, secret states $Q_S$, non-secret states $Q_{NS}$, and projection $P\colon \Sigma^*\rightarrow \Sigma_o^*$.
  From $G$, we first construct a DES $G''=(Q\cup Q^+\cup Q^-,\Sigma\cup\{@\},\delta',I)$ by creating two disjoint copies of $G$, denoted by $G^+$ and $G^-$, with the state sets $Q^+=\{q^+ \mid q \in Q\}$ and $Q^-=\{q^- \mid q \in Q\}$, and with an additional observable event $@$ that connects $G$ to $G^+$ and $G^-$ by the transitions $(q,@,q^+)$, for every $q\in Q_S$, and $(q,@,q^-)$, for every $q\in Q_{NS}$. The secret states are $Q_S''=Q^+$ and the non-secret states are $Q_{NS}''=Q\cup Q^-$, see Figure~\ref{fig:k-so-cso-1}.

  \begin{figure}
    \centering
    \includegraphics[align=c,scale=.7]{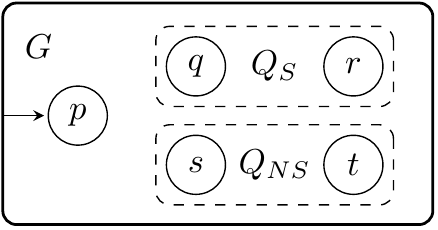}
    $\Longrightarrow$
    \includegraphics[align=c,scale=.7]{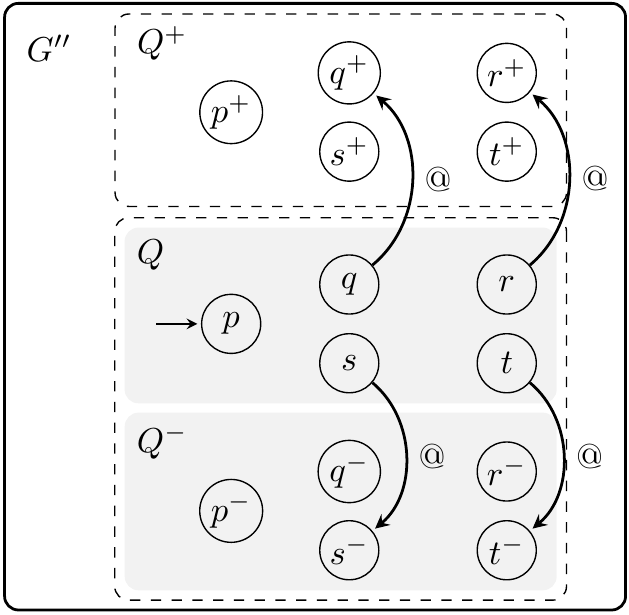}
    \caption{Automaton $G''$ of the first step transforming K-SO to CSO.}
    \label{fig:k-so-cso-1}
  \end{figure}

  The idea of the construction is that if $G$ is K\mbox{-}SO, and hence CSO, then $G$ is in a non-secret state whenever it is in a secret state. Therefore, being in a secret (and hence also in a non-secret) state, the new $@$-transitions move the computation to both new copies $G^+$ and $G^-$. In these copies, we verify that if $G$ can make $k$ steps from the secret state (in $G^+$), it can also make $k$ steps from the corresponding non-secret state (in $G^-$). This is verified using current-state opacity, by considering the states of $G^+$ secret and of $G^-$ non-secret, which requires that every move in $G^+$ must be accompanied by a move in $G^-$.

  Notice that $G''$ can be constructed in polynomial-time using no neutral states. The construction of $G''$ is already suitable to verify INSO of $G$ by checking CSO of $G''$.

  \begin{thm}[Transforming INSO to CSO]\label{prop:inso-cso}
    The DES $G$ is INSO w.r.t. $Q_S$, $Q_{NS}$, and $P$ iff $G''$ is CSO w.r.t. $Q_S''$, $Q_{NS}''$, and $P''\colon (\Sigma\cup\{@\})^* \to (\Sigma_o\cup\{@\})^*$.
  \end{thm}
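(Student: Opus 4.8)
The plan is to exploit the rigid shape of computations in $G''$. The only way to enter a secret state (a state of $Q^+$) is to run a $@$-free string $s\in\Sigma^*$ inside the original block $Q$ up to some $q\in\delta(I,s)\cap Q_S$, fire the unique $@$-transition into $q^+$, and then continue with a $@$-free string $t\in\Sigma^*$ inside the copy $G^+$, which is closed under $\Sigma$-transitions and has no $@$-transition leaving it. I would first record the two resulting structural facts: $\delta'(I,w)\cap Q^+\neq\emptyset$ holds iff $w=s@t$ with $\delta(\delta(I,s)\cap Q_S,t)\neq\emptyset$, and symmetrically $\delta'(I,w)\cap Q^-\neq\emptyset$ holds iff $w=s'@t'$ with $\delta(\delta(I,s')\cap Q_{NS},t')\neq\emptyset$. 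Everything else is bookkeeping on top of these.

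For the forward direction, assuming $G$ is INSO, I would take any $w=s@t\in L(G'')$ reaching $Q^+$. The first structural fact gives $\delta(\delta(I,s)\cap Q_S,t)\neq\emptyset$, and INSO---where $K=\infty$ places no bound on $|P(t)|$, which is exactly why INSO and not some finite K-SO is the right notion here---supplies $s't'\in L(G)$ with $P(s')=P(s)$, $P(t')=P(t)$, and $\delta(\delta(I,s')\cap Q_{NS},t')\neq\emptyset$. Then $s'@t'\in L(G'')$ reaches $Q^-\subseteq Q_{NS}''$ by the second structural fact and has $P''(s'@t')=P(s')@P(t')=P(s)@P(t)=P''(w)$, so $G''$ is CSO.

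For the converse, assuming $G''$ is CSO, I would take $st\in L(G)$ with $\delta(\delta(I,s)\cap Q_S,t)\neq\emptyset$; then $s@t$ reaches $Q^+$, so CSO furnishes some $w'$ with $P''(w')=P(s)@P(t)$ reaching $Q_{NS}''$. Since $@$ is observable and occurs exactly once in $P''(w')$, the string $w'$ carries exactly one $@$ and factors as $w'=s'@t'$ with $P(s')=P(s)$ and $P(t')=P(t)$; moreover, after the $@$ the run is trapped inside $Q^+\cup Q^-$, so a non-secret current state forces reaching $Q^-$ rather than $Q$. The second structural fact then yields $\delta(\delta(I,s')\cap Q_{NS},t')\neq\emptyset$, and since $s't'\in L(G)$ this is precisely the witness demanded by INSO.

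The main obstacle is the factorization step in the converse: one must argue that any string sharing the observable image $P(s)@P(t)$ necessarily carries a single $@$ cleanly splitting it into a before-part and an after-part whose projections match $P(s)$ and $P(t)$ separately, and that non-secrecy of its current state cannot be witnessed inside the untouched block $Q$ but only inside the mirror copy $Q^-$. Both rely on the closedness of $G^+$ and $G^-$ under $\Sigma$-transitions and on $@$ being the sole, observable bridge out of $Q$; once these are stated precisely, the equivalence is immediate.
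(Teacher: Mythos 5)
Your proof is correct and follows essentially the same route as the paper's: the forward direction is identical (decompose $w=s@t$, invoke INSO, rebuild $s'@t'$ landing in $Q^-$), and your converse is just the contrapositive of the paper's argument that non-INSO implies non-CSO. Your explicit factorization step (the unique observable $@$ splits any observationally equivalent string, and post-$@$ runs are trapped in $Q^+\cup Q^-$) is the same structural fact the paper uses implicitly when it quantifies only over strings of the form $s'@t'$, so you are if anything slightly more careful on that point.
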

  \begin{proof}
    Assume that $G$ is INSO. To show that $G''$ is CSO, we consider a string $w$ such that $\delta''(I,w)\cap Q_S'' \neq \emptyset$, and show that there is $w'$ such that $P''(w)=P''(w')$ and $\delta''(I,w')\cap Q_{NS}'' \neq \emptyset$.
    Since $Q_S''=Q^+$, string $w$ is of the form $w_1@w_2$. By construction, there is a secret state $q \in \delta(I,w_1)\cap Q_S$ in $G$ such that $q^+ \in \delta''(I,w_1@) \cap Q_S''$ in $G''$, and $w_2$ is generated from $q^+$. Therefore, we can generate $w_2$ from state $q$ in $G$, that is, $\delta(\delta(I,w_1)\cap Q_S, w_2) \neq \emptyset$, and infinite-step opacity of $G$ implies that there is $w_1'w_2' \in L(G)$ such that $P(w_1)=P(w_1')$, $P(w_2)=P(w_2')$, and $\delta (\delta(I,w_1')\cap Q_{NS}, w_2') \neq \emptyset$.
    If we set $w'=w_1'@w_2'$, then $P''(w)=P''(w')$ and we have that $\emptyset \neq \delta''(\delta''(I,w_1'@)\cap Q_{NS}'', w_2') \subseteq Q_{NS}''$, which completes this part of the proof.

    If $G$ is not INSO, then there is a string $st \in L(G)$ such that $\delta(\delta(I,s)\cap Q_S,t) \neq \emptyset$ and $\delta (\delta(I,s')\cap Q_{NS}, t') = \emptyset$ for every $s't' \in L(G)$ with $P(s)=P(s')$ and $P(t)=P(t')$. Taking $s@t \in L(G'')$, we obtain that $\emptyset \neq \delta''(\delta''(I,s@)\cap Q_S'',t) = \delta''(I,s@t) \subseteq Q_S''$ and, for every $s'@t' \in L(G'')$ with $P''(s@t)=P''(s'@t')$, we have that $\delta''(I,s'@t') \cap Q_{NS}'' = \delta'' (\delta''(I,s'@)\cap Q_{NS}'', t') = \emptyset$, and hence $G''$ is not CSO.
  \end{proof}

  Although $G''$ can verify INSO of $G$ by checking CSO of $G''$, $G''$ is not suitable to verify K-SO in general; indeed, $G''$ verifies any number of steps from the visited secret state rather than at most K steps. To overcome this issue, we extend the construction by adding a counter that allows us to count up to K observable events from a visited secret state. To this aim, we use the automaton $\A_\textrm{K}$ constructed in Appendix~\ref{appB}. Recall that $\A_\textrm{K}$ is of size polynomial in the logarithm of K, that the unique initial state of $\A_\textrm{K}$ is denoted by $q_0$, and that the observer of $\A_\textrm{K}$ has a unique path of length K consisting solely of non-marked states, while all the other states are marked.

  However, the automata $G$, $G^+$, $G^-$ are over the alphabet $\Sigma$, while $\A_\textrm{K}$ is over $\Gamma$, which is disjoint from $\Sigma$. Therefore, we change the alphabets of the automata to $\Sigma' = \Sigma \cup (\Sigma_o \times \Gamma)$ as follows. In $G^+$ and $G^-$, we replace every {\em observable\/} transition $(p,\sigma,q)$ by $|\Gamma|$ transitions $(p,(\sigma,\gamma),q)$, for every $\gamma \in \Gamma$, and denote the results by $\tilde G^+$ and $\tilde G^-$. Similarly, in $\A_\textrm{K}$, we replace every transition $(p,\gamma,q)$ by $|\Sigma_o|$ transitions $(p,(\sigma,\gamma),q)$, for every observable $\sigma \in \Sigma_o$, and denote the result by $\tilde \A_\textrm{K}$.
  If we simplify the strings of the form $(a,a')(b,b')$ as $(ab,a'b')$, then the language $P(L(\tilde G^+)) = \{ (w,w') \in (\Sigma_o \times \Gamma)^* \mid w\in P(L(G^+)),|w|=|w'|\}$. Similarly for $\tilde G^-$ and $\tilde \A_\textrm{K}$.

  For a moment, we admit neutral states, and construct the NFA $G'''$ as a disjoint union of $G$, $\tilde G^+$, $\tilde G^-$, and $\tilde \A_\textrm{K}$, together with the transitions $(q,@,q^+)$ and $(q,@,q_0)$, for every $q\in Q_S$, where $q_0$ is the initial state of $\tilde \A_\textrm{K}$, and by $(q,@,q^-)$, for every $q\in Q_{NS}$. The secret states are $Q_S'''=Q^+$ and the non-secret states are $Q_{NS}'''=Q^- \cup \{\text{marked states of } \tilde \A_\textrm{K}\}$. The other states are {\em neutral}.

  The construction transforms the K-SO problem of $G$ to the CSO problem of $G'''$, as we show below. Since the transformation is polynomial in both the system size and the encoding (logarithm) of K, it improves our recent result~\cite{BalunMasopust2021}.

  \begin{figure}
    \centering
    \includegraphics[scale=.55]{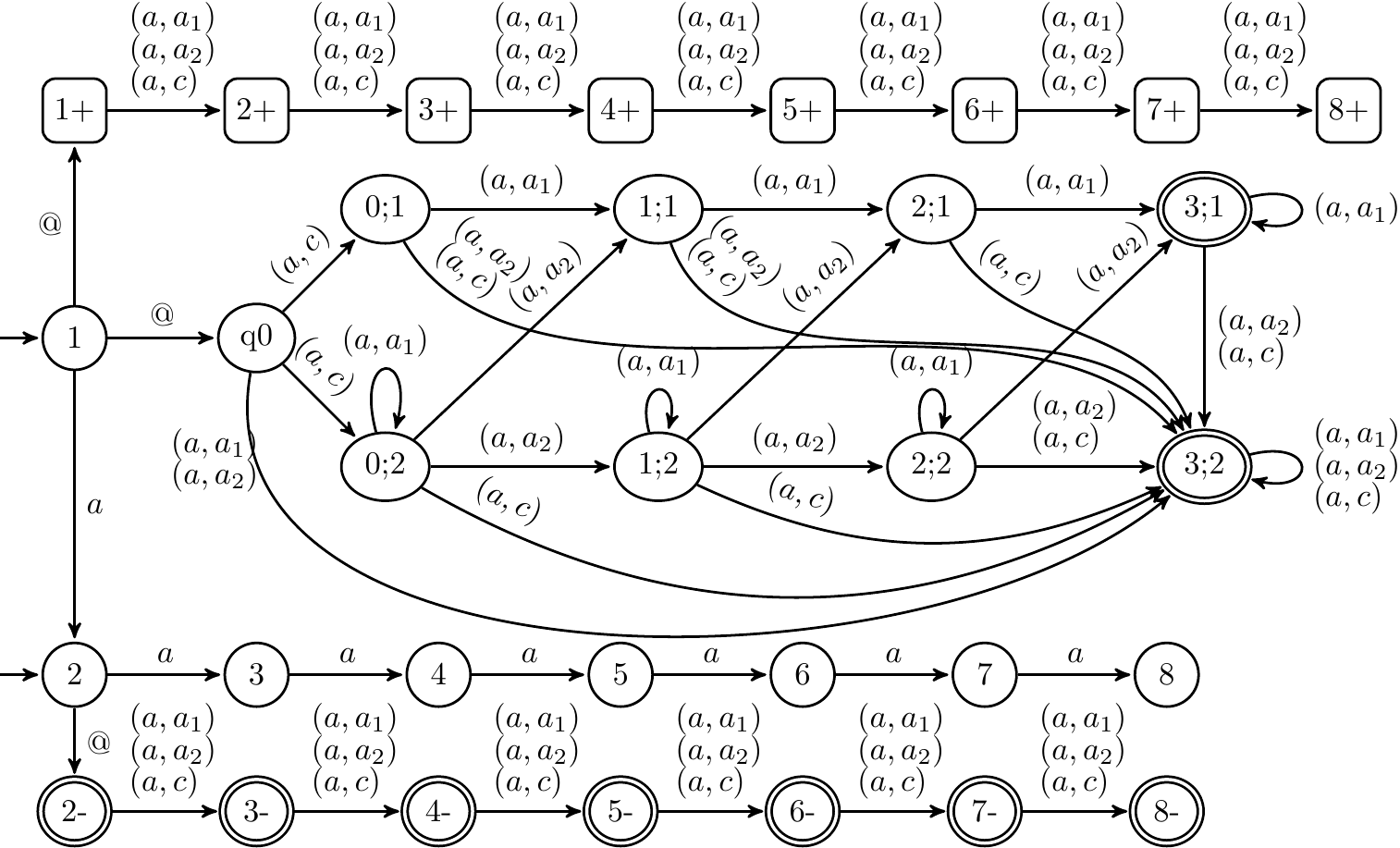}
    \caption{DES $G'''$ of the transformation $6$-SO to CSO with neutral states; secret states are squared and non-secret states are marked.}
    \label{priklad1-CSO}
  \end{figure}

  To illustrate the construction, we transform the 6-SO problem of $G=(\{1,\ldots,8\},\{a\},\delta,\{1,2\})$ with the transitions $\delta(i,a)=\{i+1\}$, $i=1,\ldots,7$, $Q_{S}=\{1\}$, and $Q_{NS}=\{2\}$. Notice that $G$ is 6-SO, since we can make 6 steps from both states 1 and 2. To encode K~$=6$, the transformation uses NFA $\A_6 = \A_{2,2}$ (see Appendix~\ref{appB}), and results in $G'''$ depicted in Figure~\ref{priklad1-CSO}, where all non-secret states are marked. The minimized observer of $G'''$ is shown in Figure~\ref{priklad1-obs}. Since every state of the observer reachable by a string containing $@$ is marked, it has to contain a non-secret state of $G$, that is, $G'''$ is CSO.

  If we remove state $8$ from $G$ together with the corresponding transitions, then $G$ is not 6-SO, since we can make six steps from the secret state 1, but only five steps from the corresponding non-secret state 2. The transformation results in $G'''$ that coincides with the automaton of Figure~\ref{priklad1-CSO} disregarding states $8$, $8^+$, $8^-$, and corresponding transitions. The minimized observer is shown in Figure~\ref{priklad2-obs}, where state $4$, corresponding to state $\{7^+,(2;1),(2;2)\}$ consisting of secret states of $G$, is reachable by $@(a,c)(a,a_1)(a,a_1)(a,a_2)(a,a_1)(a,a_2)$, that is, $G'''$ is not CSO.

  \begin{figure}
    \centering
    \includegraphics[scale=.55]{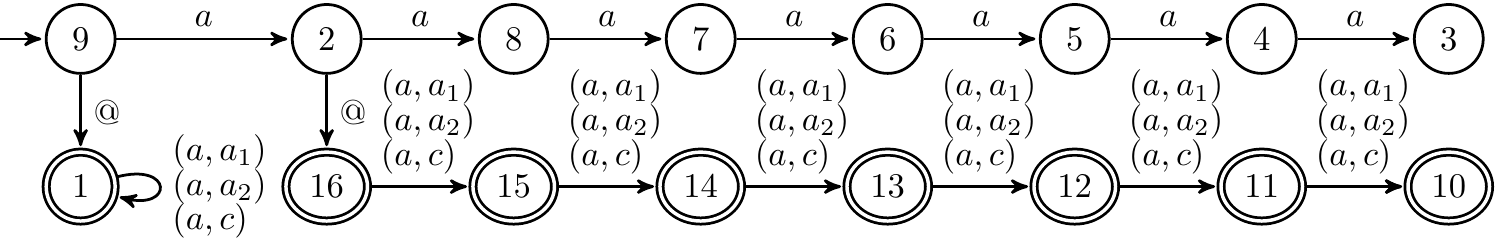}
    \caption{The minimized observer of $G'''$ of Figure~\ref{priklad1-CSO}.}
    \label{priklad1-obs}
  \end{figure}

  \begin{figure}
    \centering
    \includegraphics[scale=.6]{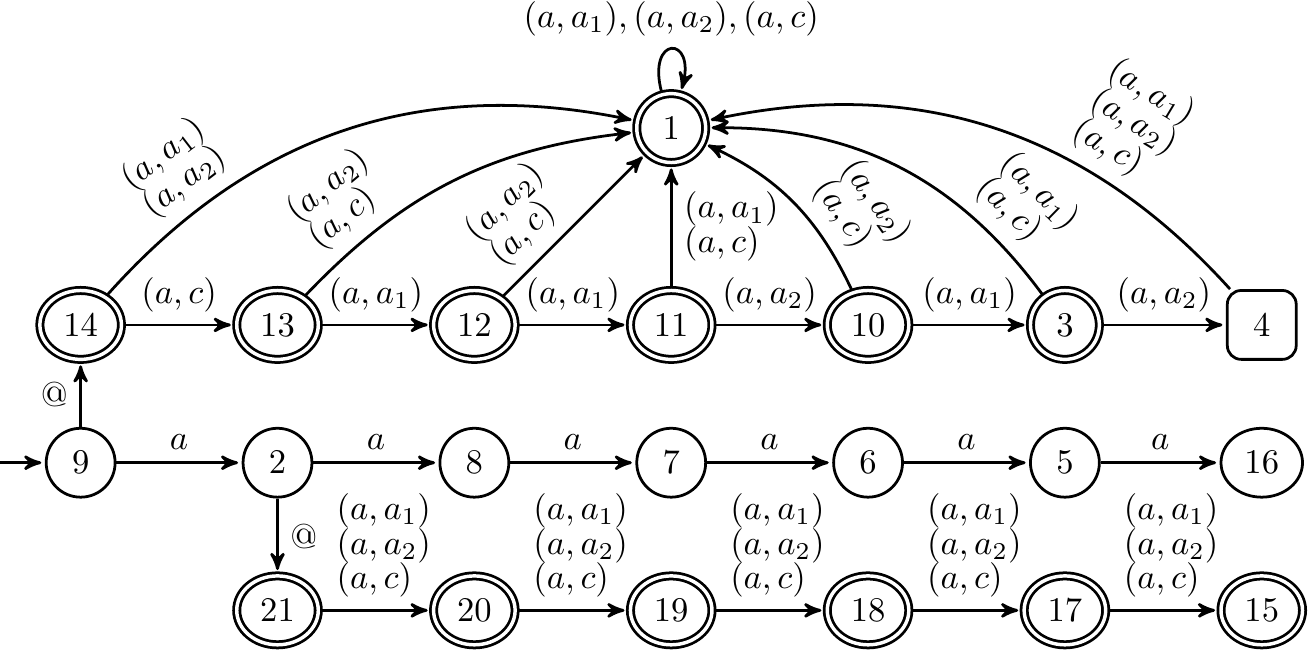}
    \caption{The minimized observer of $G'''$ of Figure~\ref{priklad1-CSO} disregarding states $8$, $8^+$, $8^-$, and corresponding transitions.}
    \label{priklad2-obs}
  \end{figure}

  \begin{thm}[K-SO to CSO with neutral states]
    The DES $G$ is K-SO w.r.t. $Q_S$, $Q_{NS}$, and $P$ iff $G'''$ is CSO w.r.t. $Q_S'''$, $Q_{NS}'''$, and $P'''\colon (\Sigma'\cup\{@\})^* \to (\Sigma_o\cup\{@\}\cup \Sigma_o\times\Gamma)^*$.
  \end{thm}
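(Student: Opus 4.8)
The plan is to mirror the INSO-to-CSO argument of Theorem~\ref{prop:inso-cso}, augmenting it with the counter $\tilde\A_\textrm{K}$ so that only behaviors of at most K observable steps from a visited secret state are relevant. The first thing I would establish is how the three copies synchronize through the joint alphabet. A string reaching $\tilde G^+$ has the form $s@\hat t$, where $s$ reaches a secret state $q\in Q_S$ of $G$ and $\hat t$ is a copy of a string $t$ generatable from $q$ with each observable $\sigma_i$ relabeled to $(\sigma_i,\gamma_i)$; its observation is $P(s)\,@\,(\sigma_1,\gamma_1)\cdots(\sigma_k,\gamma_k)$ with $k=|P(t)|$. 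Because observable transitions of $\tilde G^+$ and $\tilde G^-$ carry an arbitrary second component while transitions of $\tilde\A_\textrm{K}$ carry an arbitrary first component, the current-state estimate of such an observation decomposes: the first components $\sigma_1\cdots\sigma_k$ drive the copies $\tilde G^+,\tilde G^-$ (contributing secret, resp.\ non-secret, states exactly as in Theorem~\ref{prop:inso-cso}), while the second components $\gamma_1\cdots\gamma_k$ drive $\tilde\A_\textrm{K}$ from $q_0$, contributing a non-secret (marked) state precisely when the observer of $\A_\textrm{K}$ leaves its unique non-marked path.

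For the forward direction I would assume $G$ is K-SO, fix any observation $o=P(s)\,@\,(\sigma_1,\gamma_1)\cdots(\sigma_k,\gamma_k)$ whose estimate meets $Q_S'''=Q^+$, and show the estimate also meets $Q_{NS}'''$. If $k>K$, then by the defining property of $\A_\textrm{K}$ recalled from Appendix~\ref{appB} any $\Gamma$-string of length exceeding K drives the observer of $\A_\textrm{K}$ to a marked state, so the $\tilde\A_\textrm{K}$ branch alone places a non-secret state in the estimate; the same holds if $k\le K$ but the $\gamma_i$ deviate from the non-marked path. Otherwise the $\tilde\A_\textrm{K}$ branch is neutral, and I would invoke K-SO of $G$ on the pair $(s,t)$ with $|P(t)|\le K$ to obtain $s't'$ with $P(s')=P(s)$, $P(t')=P(t)$, reaching a non-secret state of $G$; relabeling $t'$ with the same $\gamma_i$ (legal since the second component is free in $\tilde G^-$) yields a string $s'@\hat t'$ into $Q^-$ projecting to $o$, so a non-secret state lies in the estimate.

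For the converse I would assume $G$ is not K-SO and build a witness violating CSO of $G'''$. From the failure I take $st\in L(G)$ with $|P(t)|\le K$, $\delta(\delta(I,s)\cap Q_S,t)\neq\emptyset$, and no look-alike $s't'$ reaching $Q_{NS}$. Writing $P(t)=\sigma_1\cdots\sigma_k$ with $k\le K$, I choose $\gamma_1\cdots\gamma_k$ along the unique non-marked path of the $\A_\textrm{K}$-observer (available because $k\le K$) and set $o=P(s)\,@\,(\sigma_1,\gamma_1)\cdots(\sigma_k,\gamma_k)$, realized by $s@\hat t$ into $Q^+$. I then verify that no non-secret state occurs in the estimate of $o$: a $\tilde G^-$ contribution would require a look-alike $s't'$ generating $t'$ with $P(t')=P(t)$ from a non-secret state, contradicting the failure of K-SO, while the $\tilde\A_\textrm{K}$ branch stays on the non-marked path by the choice of the $\gamma_i$ and so contributes only neutral states. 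Hence the estimate meets $Q^+$ but avoids $Q_{NS}'''$, so $G'''$ is not CSO.

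I expect the main obstacle to be making this ``decomposition'' rigorous in the disjoint-union (rather than product) construction: one must verify that a single observation is simultaneously realizable in the relevant branches with matching first and second components, exploiting that the complementary component is free in each copy, and that the initial state $q_0$ of $\tilde\A_\textrm{K}$ is reached from every visited secret state, so the counter is always co-reachable with $\tilde G^+$. The remaining care lies in transcribing the $\A_\textrm{K}$ counting property from Appendix~\ref{appB}---that ``at most K observable steps'' corresponds exactly to the non-marked path while ``more than K'' forces a marked state---which is precisely what ties the CSO test on $G'''$ to the K-bound $|P(t)|\le K$ in the definition of K-SO.
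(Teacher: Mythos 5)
Your proposal is correct and follows essentially the same route as the paper's proof: split on whether the post-$@$ observation has length at most K, use K-step opacity of $G$ to produce a matching string into $Q^-$ in the bounded case and the marked states of $\tilde\A_{\textrm{K}}$ in the unbounded case, and for the converse pick the $\gamma_i$ along the unique non-marked path of the $\A_{\textrm{K}}$-observer so that neither $\tilde G^-$ nor $\tilde\A_{\textrm{K}}$ contributes a non-secret state. If anything, you are slightly more careful than the paper in the forward direction by explicitly noting that the second components $\gamma_1\cdots\gamma_k$ of the given observation must be matched by the constructed look-alike, and that when they deviate from the non-marked path the counter branch alone already supplies a non-secret state.
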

  \begin{proof}
    Assume that $G$ is K-SO. We show that $G'''$ is CSO. To this end, we consider a string $w$ such that $\delta'''(I,w)\cap Q_S''' \neq \emptyset$, and show that there is a string $w'\in P'''^{-1}P'''(w)$ such that $\delta'''(I,w')\cap Q_{NS}''' \neq \emptyset$.
    Since $Q_S'''=Q^+$, string $w$ is of the form $w_1 @ w_2$ and, by construction, $\delta(I,w_1)$ contains a secret state of $G$ from which $w_2$ can be generated.
    If $|P(w_2)|\le$~K, then K-SO of $G$ implies the existence of $w_1'w_2' \in L(G)$ such that $P(w_1')=P(w_1)$, $P(w_2')=P(w_2)$, and $\delta (\delta(I,w_1')\cap Q_{NS}, w_2') \neq \emptyset$; that is, there is a non-secret state $q\in \delta(I,w_1')$ from which $w_2'$ can be generated, reaching a state $r$.
    Then, for $w'=w_1'@(w_2',x)$, where $x$ is a prefix of the unique string not accepted by $\A_{\textrm{K}}$ of length $|P(w_2')|$, we obtain that $\delta'''(I,w')\cap Q_{NS}''' \neq \emptyset$, since the non-secret state $r^- \in Q^-$ is reachable from state $q^-$ in $G'''$ by $(w_2',x)$.
    If $|P(w_2)|> K$, every string $w_1'@(w_2',y)\in (\Sigma'\cup\{@\})^*$ is such that $y$ is accepted by $\A_{\textrm{K}}$, and hence $\delta'''(I,w_1'@(w_2',y))\cap Q_{NS}'''\neq \emptyset$. Thus, $G'''$ is CSO.

    Assume that $G$ is not K-SO, that is, there exists $st \in L(G)$ such that $|P(t)|\le K$, $\delta(\delta(I,s)\cap Q_S,t) \neq \emptyset$ and, for every $s'\in P^{-1}P(s)$ and $t'\in P^{-1}P(t)$, $\delta (\delta(I,s')\cap Q_{NS}, t') = \emptyset$. Then, in particular, $\delta'''(I,s@) \cap Q_{S}''' \neq \emptyset$.
    If $\delta(I,s') \cap Q_{NS} = \emptyset$, then $\delta'''(I,s'@) \cap Q_{NS}''' = \emptyset$, and hence $G'''$ is not CSO.
    If $\delta(I,s') \cap Q_{NS} = Z \neq \emptyset$, we consider any string $s'@(t',y) \in L(G''')$, where $y$ is a prefix of the unique string not accepted by $\A_\textrm{K}$, which exists because $|y| = |P(t')| \le$~K. Then, $(t',y)$ is not accepted by $\tilde \A_\textrm{K}$, and hence $\delta'''(I,s'@(t',y)) \cap Q_{NS}''' = \delta'''( [\delta'''(I,s'@)\cap Q^-], (t',y)) = \delta'''(Z^-, (t',y)) = \emptyset$, where $Z^- = \{ z^- \mid z\in Z\}$, because $(t',y)$ is not generated in $G'''$ from a state of $Z^-$, since $t'$ cannot be generated in $G$ from any $z\in Z$. Again, $G'''$ is not CSO.
  \end{proof}

  Finally, to transform K-SO to CSO without using neutral states, we make all states of $\tilde G^+$ both initial and marked, and synchronize the computations of $\tilde G^+$ and $\tilde \A_{\textrm{K}}$ by their synchronous product $\tilde G^+ \| \tilde \A_\textrm{K}$.
  Now, we construct a DES $G'$ as a disjoint union of $G$, $\tilde G^-$, and $\tilde G^+ \| \tilde \A_\textrm{K}$, connected together by transitions $(q,@,(q^+,q_0))$, for every $q\in Q_S$, and $(q,@,q^-)$, for every $q\in Q_{NS}$. The secret states of $G'$ are the non-marked states of $\tilde G^+ \| \tilde \A_\textrm{K}$. All the other states are non-secret.

  This transformation can be done in polynomial time in the system size and the binary encoding of K. How to reduce the number of observable events (in all transformations of this section) is discussed in Appendix~\ref{appA}.

  \begin{figure}
    \centering
    \includegraphics[angle=90,origin=c,scale=.49]{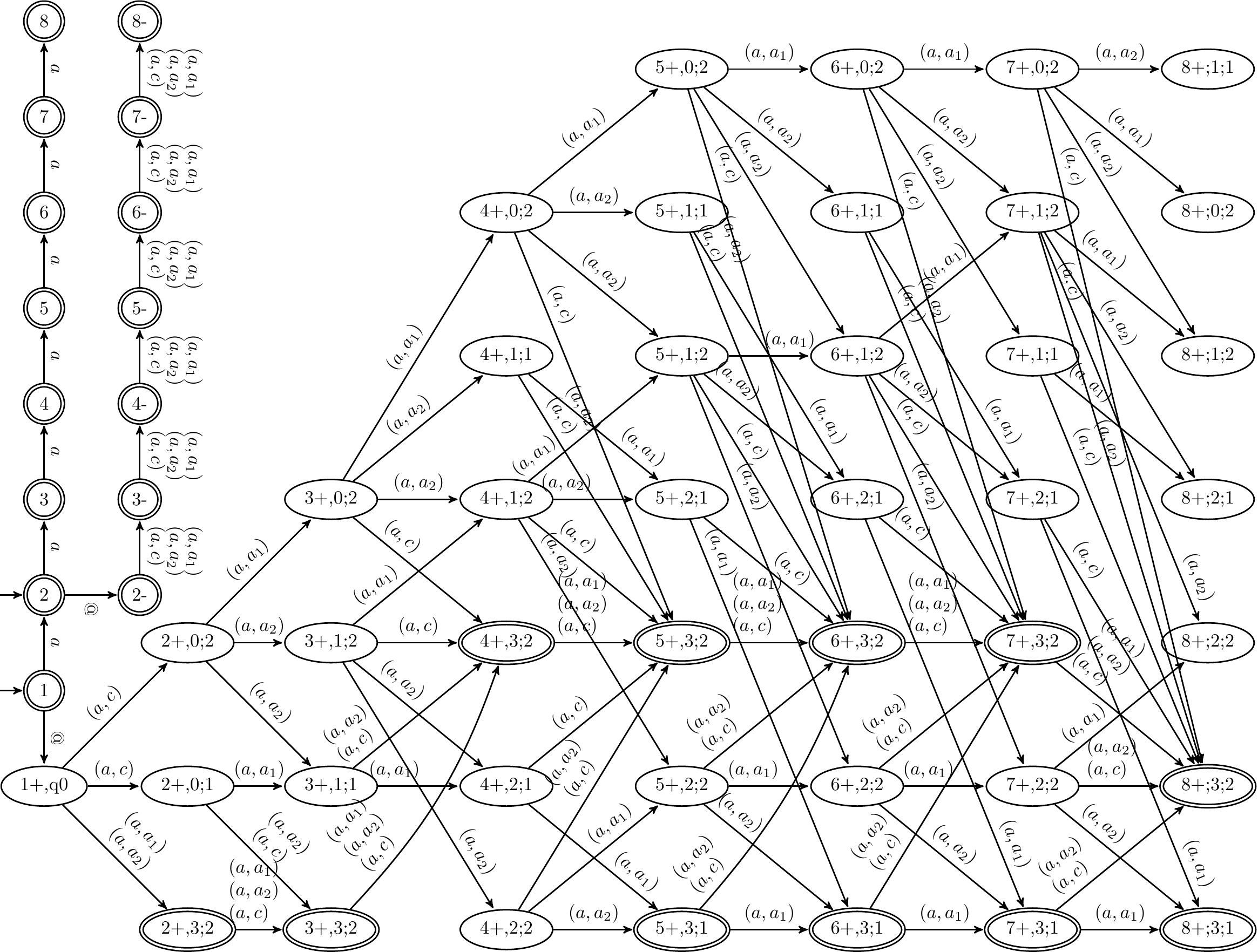}
    \caption{DES $G'$ with a relevant part of $\tilde G^+ \| \tilde \A_6$; non-secret states are marked, other states are secret.}
    \label{priklad6-CSO}
  \end{figure}

  To illustrate the construction, we again transform the 6\mbox{-}SO problem $G=(\{1,\ldots,8\},\{a\},\delta,\{1,2\})$ with state $1$ secret and other states non-secret, and $\delta(i,a)=\{i+1\}$, $i=1,\ldots,7$. The transformation results in $G'$ depicted in Figure~\ref{priklad6-CSO}, using again the NFA $\A_6$. The minimized observer of $G'$ is depicted in Figure~\ref{priklad6-obs}. Since every state of the observer reachable by a string containing $@$ is marked, it has to contain a non-secret state of $G$, that is, $G'$ is CSO.

  If we remove state $8$ from $G$ together with the corresponding transitions, the transformation results in the DES $G'$ that coincides with the NFA of Figure~\ref{priklad6-CSO} without states containing $8$, $8^+$, $8^-$, and the corresponding transitions. The minimized observer is shown in Figure~\ref{priklad5-obs}, where state $4$, abbreviating the state $\{(7^+,(2;1)),(7^+,(2;2))\}$ consisting of secret states of $\tilde G^+ \| \tilde \A_\textrm{K}$, is reachable by the string $@(a,c)(a,a_1)(a,a_1)$ $(a,a_2)(a,a_1)(a,a_2)$, that is, $G'$ is not CSO.

  \begin{figure}
    \centering
    \includegraphics[scale=.5]{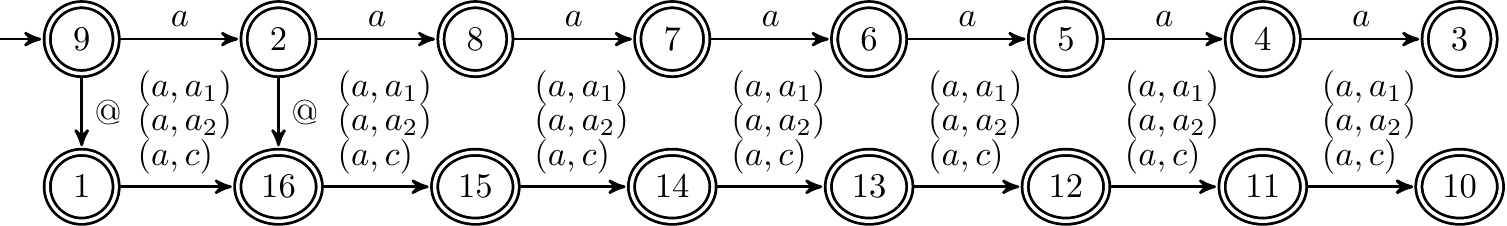}
    \caption{The minimized observer of $G'$.}
    \label{priklad6-obs}
  \end{figure}

  \begin{figure}
    \centering
    \includegraphics[scale=.5]{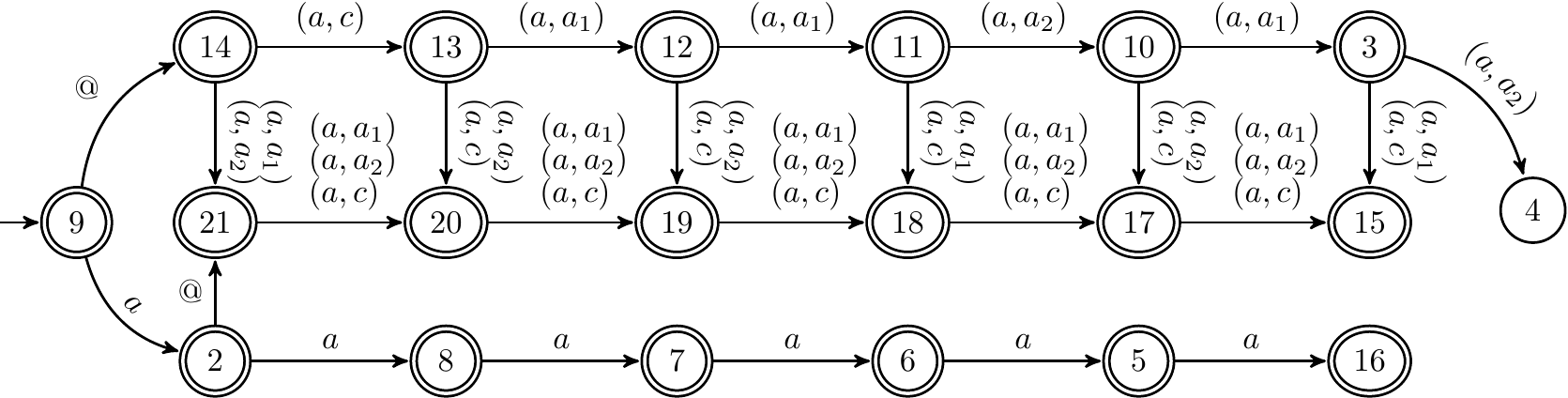}
    \caption{The minimized observer of $G'$ of Figure~\ref{priklad6-CSO} disregarding states containing $8$, $8^+$, $8^-$, and corresponding transitions.}
    \label{priklad5-obs}
  \end{figure}

  \begin{thm}[K-SO to CSO without neutral states]\label{thm:kso-cso}
    The DES $G$ is K-SO w.r.t. $Q_S$, $Q_{NS}$, and $P$ iff $G'$ is CSO w.r.t. $Q_S'$, $Q_{NS}'$, and $P'\colon (\Sigma'\cup\{@\})^* \to (\Sigma_o\cup\{@\}\cup \Sigma_o\times\Gamma)^*$.
  \end{thm}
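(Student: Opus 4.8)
The plan is to prove both implications at once through the standard observer characterization of current-state opacity: $G'$ is CSO with respect to $Q_S'$, $Q_{NS}'$, $P'$ if and only if no reachable state of the observer of $G'$ is contained in $Q_S'$. I would therefore establish the single equivalence that a reachable observer state contained in $Q_S'$ exists if and only if $G$ is \emph{not} K-SO; the theorem is then immediate.

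First I would describe the reachable observer states of $G'$. Because $@$ is observable, labels only the transitions leaving $G$, and no state of the copies emits $@$, every state of $G'$ lying in the product $\tilde G^+\|\tilde\A_\textrm{K}$ is reached by a string $s@v$ with $s\in\Sigma^*$ driving $G$ into $Q_S$ and $v\in(\Sigma_{uo}\cup(\Sigma_o\times\Gamma))^*$ driving the product from some $(p^+,q_0)$. Writing $A=\delta(I,P^{-1}P(s))$ for the $G$-observer state reached before $@$ and $P'(v)=(\pi,y)$ in the simplified pair notation (so that $\pi$ is the $\Sigma_o$-part, $y$ the $\Gamma$-part, and $|\pi|=|y|$), the observer state $X$ of $G'$ reached by $P'(s@v)$ splits, after the $@$, into a product-part coming from $A\cap Q_S$ and a $\tilde G^-$-part coming from $A\cap Q_{NS}$; neutral states of $A$ have no outgoing $@$ and contribute nothing, and no state of $G$ survives the $@$. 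Since all states of $G$ and of $\tilde G^-$ are non-secret, $X\subseteq Q_S'$ demands a nonempty product-part that is entirely secret together with an empty $\tilde G^-$-part.

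Next I would read off these two requirements. In the product, $\tilde G^+$ offers every $\Gamma$-tag on each observable step, so the nondeterministic choices of $\A_\textrm{K}$ and of $\tilde G^+$ decouple and the $\A_\textrm{K}$-components appearing in $X$ are exactly the subset $\delta_{\A}(q_0,y)$ reached in the observer of $\A_\textrm{K}$ on $y$; by the property of $\A_\textrm{K}$ from Appendix~\ref{appB}, this subset contains no marked state precisely when $y$ is a prefix of the unique string not accepted by $\A_\textrm{K}$, which forces $|\pi|=|y|\le K$. Hence the product-part is nonempty and all-secret iff $\delta(A\cap Q_S,P^{-1}(\pi))\neq\emptyset$ together with $|\pi|\le K$ and $y$ the length-$|\pi|$ prefix of that string. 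Independently, as $\tilde G^-$ carries free $\Gamma$-tags and never synchronizes with $\A_\textrm{K}$, its part is empty iff $\delta(A\cap Q_{NS},P^{-1}(\pi))=\emptyset$. Unfolding $A=\delta(I,P^{-1}P(s))$, the conjunction of these conditions says exactly that some $st\in L(G)$ with $|P(t)|\le K$ reaches $Q_S$ and admits the continuation $t$, while no $s't'\in L(G)$ with $P(s')=P(s)$ and $P(t')=P(t)$ reaches $Q_{NS}$ and admits $t'$---that is, that $G$ is \emph{not} K-SO. For the ``if'' part of the equivalence I would take such a secret witness $st$, pick $y$ as the length-$|P(t)|$ prefix of the unique non-accepted string (available since $|P(t)|\le K$), and check that the observer state reached by $P'(s@v)$ with $P'(v)=(P(t),y)$ is a subset of $Q_S'$; the ``only if'' part is the reading just given.

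The step I expect to be the main obstacle is the identification ``product-part all-secret $\iff$ $y$ is a prefix of the unique non-accepted string of $\A_\textrm{K}$.'' Here the nondeterminism of $\A_\textrm{K}$ and the passage to the observer of $G'$ interact: a single non-marked $\A_\textrm{K}$-state reachable by $y$ would not suffice, since the observer of $G'$ gathers \emph{all} reachable $\A_\textrm{K}$-states and one marked state among them contributes a non-secret product state. Pinning down that the reachable $\A_\textrm{K}$-components are exactly $\delta_{\A}(q_0,y)$---which uses that $\tilde G^+$ freely supplies every $\Gamma$-tag, so the two components do not constrain each other---and then invoking Appendix~\ref{appB} to equate non-markedness with the ``at most $K$ steps'' bound is the technical core; the bookkeeping for the single occurrence of $@$ and for the free tags of $\tilde G^-$ is routine.
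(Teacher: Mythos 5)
Your proof is correct, and it reaches the same conclusion from the same two key ingredients as the paper---the free $\Gamma$-tagging of $\tilde G^+$ and $\tilde G^-$, and the Appendix~\ref{appB} property that the observer of $\A_{\textrm{K}}$ is non-marked exactly along the prefixes of the unique length-K string---but it packages them differently. The paper argues directly on strings, in two separate implications with subcases: given a secret-reaching $w=w_1@w_2$ it exhibits a single observationally equivalent witness $w'=w_1'@(w_2',x)$ reaching a non-secret state (splitting on $|P(w_2)|\le$~K versus $|P(w_2)|>$~K), and conversely it exhibits a single observation $P(s)@(P(t),y)$ whose entire estimate misses $Q_{NS}'$ (splitting on $Z=\emptyset$ versus $Z\neq\emptyset$). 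You instead prove one biconditional through the observer characterization of CSO, which requires you to describe the full reachable observer state: its product part factors as $\{r^+\mid r\in\delta(A\cap Q_S,P^{-1}(\pi))\}\times\delta_{\A}(q_0,y)$ precisely because the tags decouple, and its $\tilde G^-$ part is $\{r^-\mid r\in\delta(A\cap Q_{NS},P^{-1}(\pi))\}$ independently of $y$. What your route buys is uniformity---the $Z=\emptyset$ case and both values of $|\pi|$ fall out of one formula---and it makes explicit the point the paper leaves implicit, namely that a single marked $\A_{\textrm{K}}$-component in the estimate already certifies CSO, so non-markedness of the whole set $\delta_{\A}(q_0,y)$ (not just of one run) is what must be equated with $y$ being a prefix of the forbidden string. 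What the paper's route buys is that it never needs to justify the product decomposition of observer states; it only ever needs one witness run in each component, which is weaker and cheaper to verify. Both arguments rest on the same facts, so I consider your proposal a valid, slightly more structural, rendering of the paper's proof.
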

  \begin{proof}
    Assume that $G$ is K-SO. We show that $G'$ is CSO. To this end, we consider a string $w$ such that $\delta'(I,w)\cap Q_S' \neq \emptyset$, and show that there exists a string $w'\in P'^{-1}P'(w)$ such that $\delta'(I,w')\cap Q_{NS}' \neq \emptyset$.
    Since $Q_S'$ consists of non-marked states of $\tilde G^+ \| \tilde \A_\textrm{K}$, string $w$ is of the form $w_1 @ w_2$ and, by construction, $\delta(I,w_1)$ contains a secret state of $G$, from which $w_2$ can be generated.
    If $|P(w_2)|\le$~K, then K-SO of $G$ implies the existence of $w_1'w_2' \in L(G)$ such that $P(w_1')=P(w_1)$, $P(w_2')=P(w_2)$, and $\delta (\delta(I,w_1')\cap Q_{NS}, w_2') \neq \emptyset$; that is, there is a non-secret state $q\in \delta(I,w_1')$ from which $w_2'$ can be generated, reaching a state $r$.
    Then, for $w'=w_1'@(w_2',x)$, where $x$ is a prefix of the unique string not accepted by $\A_{\textrm{K}}$ of length $|P(w_2)|$, we obtain that $\delta'(I,w')\cap Q_{NS}' \neq \emptyset$, since the non-secret state $r^- \in Q^-$ is reachable from state $q^-$ in $G'$ by $(w_2',x)$.
    If $|P(w_2)|> K$, every string $w_1'@(w_2',y)\in (\Sigma'\cup\{@\})^*$ is such that $y$ is accepted by $\A_{\textrm{K}}$, and hence $(w_2',y)$ is accepted by $\tilde G^+ \| \tilde \A_\textrm{K}$ because $L(\tilde G^+ \| \tilde \A_\textrm{K}) = L(\tilde G^+) \parallel L(\tilde \A_\textrm{K})$ and $(w_2',y)$ belongs to both $L(\tilde G^+)$ and $L(\tilde \A_\textrm{K})$. Therefore, $\delta'(I,w_1'@(w_2',y))\cap Q_{NS}'\neq \emptyset$, and $G'$ is CSO.

    Assume that $G$ is not K-SO, that is, there exists $st \in L(G)$ such that $|P(t)|\le K$, $\delta(\delta(I,s)\cap Q_S,t) \neq \emptyset$ and, for every $s'\in P^{-1}P(s)$ and $t'\in P^{-1}P(t)$, $\delta (\delta(I,s')\cap Q_{NS}, t') = \emptyset$. This implies that $\delta'(I,s@) \cap Q_{S}' \neq \emptyset$.
    If $\delta(I,s') \cap Q_{NS} = \emptyset$, then $\delta'(I,s'@) \cap Q_{NS}' = \emptyset$, and hence $G'$ is not CSO.
    If $\delta(I,s') \cap Q_{NS} = Z \neq \emptyset$, we take any $s'@(t',y) \in L(G')$, where $y$ is a prefix of the unique string not accepted by $\A_\textrm{K}$, which exists because $|y| = |P(t')| \le$~K. Then, $(t',y)$ is not accepted by $\tilde G^+ \| \tilde \A_\textrm{K}$, and hence $\delta'(I,s'@(t',y)) \cap Q_{NS}' = \delta'( [\delta'(I,s'@)\cap Q^-], (t',y)) = \delta'(Z^-, (t',y)) = \emptyset$, where $Z^- = \{ z^- \mid z\in Z\}$, because $(t',y)$ is not generated in $G'$ from a state of $Z^-$, since $t'$ cannot be generated in $G$ from any $z\in Z$. Again, $G'$ is not CSO.
  \end{proof}

  Again, we provide a direct transformation for systems with one observable event, where we do not admit neutral states; see our recent work~\cite{BalunMasopust2021} for systems admitting neutral states.

  The K-SO problem for systems with one observable event consists of a DES $G=(Q,\Sigma,\delta,I)$ with $\Sigma_o=\{a\}$, secret states $Q_{S}$, non-secret states $Q_{NS} = Q \setminus Q_{S}$, and projection $P\colon\Sigma^*\to \{a\}^*$.
  We denote the number of states of $G$ by $n$, and determine (in linear time) whether $P(L(G))$ is finite.

  If $P(L(G))$ is finite, we verify K-SO of $G$ in linear time by checking the subsets of states $\delta(I,P^{-1}(a^k))$, for $k \le n-1$.
  If $G$ is K-SO, and hence CSO, we set $Q_{S}'=Q_S$ and $Q_{NS}'=Q_{NS}$.
  If $G$ is not K-SO, we set $Q_{NS}'=\emptyset$ and $Q_S'=Q$.

  If $P(L(G))$ is infinite, we define $Q_{NS}'= \{ q \in Q_{NS} \mid \varphi(q) = K\}$, where $\varphi\colon Q \rightarrow \{0,\ldots, K\}$ assigns to state $q$ the maximal $k\in \{0,\ldots, K\}$ of observable steps possible from $q$. Formally, $\varphi(q) = \max \{ k\in \{0,\ldots,K\} \mid \delta(q,P^{-1}(a^k))\neq \emptyset \}$. The secret states are $Q_S'=Q\setminus Q_{NS}'$.

  \begin{thm}[K-SO to CSO with a single observable event]
    The DES $G$ with a single observable event is K-SO w.r.t. $Q_S$, $Q_{NS}$, and $P$ iff $G$ is CSO w.r.t. $Q'_S$, $Q'_{NS}$, and $P$.
  \end{thm}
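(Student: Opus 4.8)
The plan is to exploit the drastic simplification that a single observable event brings: since $\Sigma_o=\{a\}$, the projection $P(w)$ of any string is determined solely by its length, so two strings are $P$-equivalent precisely when they contain the same number of observable events. First I would record the structural facts that $L(G)$ is prefix-closed, and that from any state $q$ the numbers of observable events realizable by continuations form a downward-closed set $\{0,1,\dots,M_q\}$ with $M_q\in\mathbb N\cup\{\infty\}$; hence $\varphi(q)=\min\{M_q,K\}$ is exactly the largest $j\le K$ for which some $t$ with $|P(t)|=j$ is generable from $q$. Writing $R_i=\delta(I,P^{-1}(a^i))$ for the set of states reachable by exactly $i$ observable events, this lets me reformulate K-SO in this setting as: $G$ is K-SO iff for every $i$ and every secret state $q\in R_i$ there is a non-secret state $q'\in R_i$ with $\varphi(q')\ge\varphi(q)$. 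Indeed, a forward continuation $t$ with $|P(t)|=j\le K$ from the secret state, and a required non-secret counterpart $t'$ with $|P(t')|=j$, exist iff the respective $\varphi$-values are at least $j$, and $P(t)=P(t')$ reduces to equality of lengths. CSO w.r.t.\ any labelling reduces similarly to a statement about the sets $R_i$.

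Next I would dispatch the finite case. When $P(L(G))$ is finite, K-SO is decidable directly by inspecting the finitely many sets $R_k$ for $k\le n-1$, exactly as the construction prescribes. If the test reports K-SO, the construction keeps the labelling unchanged, and the claimed equivalence is immediate because CSO is the $j=0$ instance of K-SO, so K-SO implies CSO and both sides of the iff are true. If the test reports non-opacity, the construction sets $Q_{NS}'=\emptyset$ and $Q_S'=Q$; then every reachable state lies in $Q_S'$ while no continuation can reach $Q_{NS}'$, so CSO fails, and both sides of the iff are false (here $L(G)\neq\emptyset$, as otherwise $G$ would be vacuously K-SO and handled by the previous branch).

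The infinite case is the substantive one, and its correctness rests on one observation, call it Fact~A: since $P(L(G))$ is prefix-closed and infinite it equals $\{a\}^*$, so $R_i\neq\emptyset$ for all $i$, and because $R_{i+K}\neq\emptyset$, splitting a witnessing string at its $i$-th observable event shows that \emph{every} $R_i$ contains at least one state with $\varphi=K$. With the construction's labelling $Q_{NS}'=\{q\in Q_{NS}\mid\varphi(q)=K\}$ and $Q_S'=Q\setminus Q_{NS}'$, CSO at index $i$ demands that whenever $R_i\cap Q_S'\neq\emptyset$ we also have $R_i\cap Q_{NS}'\neq\emptyset$, i.e.\ that $R_i$ contains a non-secret state of $\varphi$-value $K$. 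I would then prove the equivalence by contraposition, index by index. If CSO fails at $i$, then $R_i$ has no non-secret state of $\varphi$-value $K$; by Fact~A the guaranteed $\varphi=K$ state of $R_i$ must therefore be secret, and every non-secret state of $R_i$ has $\varphi<K$, so this secret state witnesses a K-SO violation. Conversely, if K-SO fails at $i$ via a secret $q\in R_i$ with no non-secret $q'\in R_i$ satisfying $\varphi(q')\ge\varphi(q)$, then all non-secret states of $R_i$ have $\varphi<\varphi(q)\le K$, so $R_i\cap Q_{NS}'=\emptyset$ while the secret $q\in Q_S'$ gives $R_i\cap Q_S'\neq\emptyset$, contradicting CSO.

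The main obstacle I anticipate is justifying that retaining only the non-secret states with $\varphi=K$ as the new non-secret states is sound: a priori one might expect to need, for each $i$, a non-secret state matching the exact $\varphi$-value of each secret state in $R_i$. Fact~A is what collapses this to the single threshold $\varphi=K$, because it forces the always-present maximal-continuation state of $R_i$ to be secret in precisely those situations where K-SO is violated. Pinning down this use of prefix-closedness and infiniteness, and verifying that truncating continuations never breaks $P$-equivalence, is where the care is needed.
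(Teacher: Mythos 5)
Your proof is correct and follows essentially the same route as the paper's: the same case split on finiteness of $P(L(G))$, the same reformulation of K-SO via the sets $R_i=\delta(I,P^{-1}(a^i))$ and the function $\varphi$, and the same key observation (your Fact~A, which the paper asserts without the explicit prefix-closedness justification) that in the infinite case every $R_i$ contains a state with $\varphi=K$. The only differences are presentational: you argue the infinite case by contraposition in both directions and spell out a degenerate edge case, whereas the paper argues the K-SO direction directly.
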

  \begin{proof}
    Assume that $G$ is K-SO. If $P(L(G))$ is finite, then $G$ is CSO. If $P(L(G))$ is infinite, then, for every $w\in L(G)$, there is a state $q\in \delta(I, P^{-1}P(w))$ such that $\varphi(q)=K$. Since $G$ is K-SO, for every secret state $q_{s}\in \delta(I, P^{-1}P(w))$, there is a non-secret state $q_{ns}\in \delta(I, P^{-1}P(w))$ such that $\varphi(q_{ns})\geq\varphi(q_{s})$. Therefore, there is a non-secret state $q_{ns}'\in \delta(I, P^{-1}P(w))$ such that $\varphi(q_{ns}')=K$, which means that $q_{ns}'\in Q_{NS}'$, and hence $G$ is CSO w.r.t. $Q_{NS}'$, $Q_S'$, and $P$.

    Assume that $G$ is not K-SO. If $P(L(G))$ is finite, then $G$ is not CSO. If $P(L(G))$ is infinite, there is $w\in L(G)$ and a secret state $q_s \in \delta(I, P^{-1}P(w))$ such that $\varphi(q_{s}) > \varphi(q_{ns})$ for every non-secret state $q_{ns} \in \delta(I, P^{-1}P(w))$. Therefore, $\varphi(q_{ns}) < K$ for every $q_{ns} \in \delta(I, P^{-1}P(w)) \cap Q_{NS}$, and hence $\delta(I, P^{-1}P(w)) \cap Q_{NS}' = \emptyset$, which shows that $G$ is not CSO w.r.t. $Q_{S}'$, $Q_{NS}'$, and $P$.
  \end{proof}

\section{Conclusions}
  We designed a new algorithm verifying K-step opacity with better complexity than that of existing algorithms. In addition, our complexity does not depend on K. We compared K-step opacity with current-state opacity and infinite-step opacity, and provided new transformations among these notions that do not use neutral states, that are polynomial w.r.t. both the size of the system and the binary encoding of K, preserve determinism (see Appendix~\ref{appC}), and the resulting systems of which do not have more observable events than the input systems.

\appendices

\section{Reducing the Number of Observable Events}\label{appA}
  We now discuss how to reduce the number of observable events in systems with at least three observable events without affecting the property of being CSO. This construction is a modification of the construction of Balun and Masopust~\cite{BalunMasopust2021}.

  For an NFA $G=(Q,\Sigma,\delta,I,F)$, an alphabet $\Gamma_o\subseteq\Sigma_o$ with at least three events, and a binary encoding $e\colon \Gamma_o \to \{0,1\}^k$ of the events of $\Gamma_o$, where $k \le \lceil \log_2(|\Gamma_o|) \rceil + 1$, we define the NFA $G'=(Q',(\Sigma-\Gamma_o)\cup\{0,1\},\delta',I,F)$ as follows. We replace every transition $(p,a,q)$ with $a\in \Gamma_o$ and $e(a)=b_1b_2\cdots b_k \in \{0,1\}^k$ by $k$ transitions
  \[
    (p,b_1,p_{b_1}), (p_{b_1},b_2,p_{b_1b_2}),\ldots, (p_{b_1\cdots b_{k-1}},b_k,q)
  \]
  where $p_{b_1},\ldots,p_{b_1\cdots b_{k-1}}$ are states added to the state set of $G'$ as non-secret states. These states are created when needed for the first time, and reused later during the replacements. Figure~\ref{binEncoding} illustrates the replacement of three observable events $a_1,a_2,a_3$ with the encoding $e(a_1)=00$, $e(a_2)=01$, and $e(a_3)=10$. Notice that $G'$ can be constructed from $G$ in polynomial time.
  
  \begin{figure}
    \centering
    \includegraphics[scale=.7]{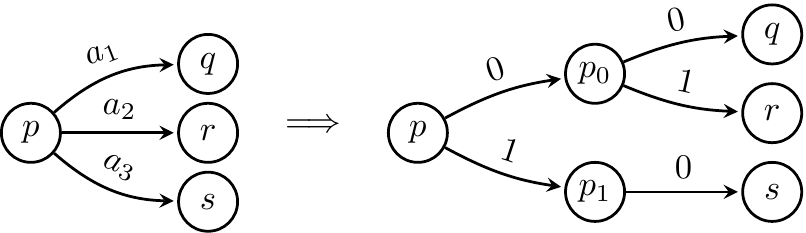}
    \caption{Replacement of observable events $a_1,a_2,a_3$ encoded $e(a_1)=00$, $e(a_2)=01$, and $e(a_3)=10$, and new states $p_0$ and $p_1$.}
    \label{binEncoding}
  \end{figure}

  \begin{lem}\label{binEnc}
    System $G$ is CSO w.r.t. $Q_S$, $Q_{NS}$, and $P\colon \Sigma^*\to\Sigma_o^*$ iff $G'$ is CSO w.r.t. $Q_{S}$, $Q_{NS}\cup (Q'-Q)$, and $P'\colon [(\Sigma-\Gamma_o)\cup\{0,1\}]^* \to [(\Sigma_o-\Gamma_o)\cup \{0,1\}]^*$.
  \end{lem}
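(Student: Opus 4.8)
The plan is to reduce the whole statement to a clean identity between the state estimates of $G$ and of $G'$. For a DES with transition function $\delta$ and projection $P$, write $\mathrm{Est}_G(v)=\delta(I,P^{-1}(v))$ for the set of states reachable under the observation $v$, and similarly $\mathrm{Est}_{G'}(v')=\delta'(I,P'^{-1}(v'))$; current-state opacity is exactly the condition that $\mathrm{Est}(v)\cap Q_S\neq\emptyset$ implies $\mathrm{Est}(v)\cap Q_{NS}\neq\emptyset$ for every observation $v$. I would first extend the encoding $e$ to the morphism $h\colon\Sigma^*\to((\Sigma-\Gamma_o)\cup\{0,1\})^*$ given by $h(a)=e(a)$ for $a\in\Gamma_o$ and $h(a)=a$ otherwise, and record that $h$ commutes with projection, $P'(h(w))=h(P(w))$, since the replacement bits are observable and $h$ fixes every event of $\Sigma-\Gamma_o$ (in particular every unobservable event).

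The central step is to prove $\mathrm{Est}_{G'}(h(v))=\mathrm{Est}_G(v)$ for every observation $v$ of $G$, which I would split into two parts. First, a transition-level simulation: since $e$ is injective and each replaced transition $(p,a,q)$ becomes a \emph{forced} chain through fresh intermediate states indexed by the source $p$ and the read bit-prefix, reading $e(a)$ from $p$ in $G'$ reaches exactly $\delta(p,a)$; as $h$ is the identity on $\Sigma-\Gamma_o$, this yields $\delta'(I,h(x))=\delta(I,x)$ for all $x\in\Sigma^*$. Second, an \emph{atomicity} argument: because the intermediate states carry \emph{only} the bit-transitions created by the construction and have no unobservable out-transitions, no run of $G'$ can insert an unobservable event in the middle of a code block, nor stop inside a block and resume later; hence every run $x'$ with $P'(x')=h(v)$ and $\delta'(I,x')\neq\emptyset$ has the form $h(x)$ with $P(x)=v$. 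Taking the union over such runs gives $\mathrm{Est}_{G'}(h(v))=\delta(I,P^{-1}(v))=\mathrm{Est}_G(v)$, and in particular this estimate lies in $Q$, so it meets neither the intermediate states nor any other state of $Q'-Q$.

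Next I would classify the observations of $G'$. Each either lies in $h(P(L(G)))$ --- the \emph{complete} observations $h(v)$ handled above --- or ends inside a code block, i.e. its final symbols form a proper nonempty prefix of some $e(a)$ (a \emph{partial} observation); this dichotomy is itself a consequence of atomicity, since between block boundaries the bit-count is always a multiple of $k$. For a partial observation $v'$ the same atomicity argument forces $\mathrm{Est}_{G'}(v')\subseteq Q'-Q$, whence $\mathrm{Est}_{G'}(v')\cap Q_S=\emptyset$ and the CSO condition holds vacuously. For a complete observation $h(v)$, using $\mathrm{Est}_{G'}(h(v))=\mathrm{Est}_G(v)\subseteq Q$ together with $Q_S'=Q_S$ and $Q_{NS}'=Q_{NS}\cup(Q'-Q)$ gives $\mathrm{Est}_{G'}(h(v))\cap Q_S=\mathrm{Est}_G(v)\cap Q_S$ and $\mathrm{Est}_{G'}(h(v))\cap Q_{NS}'=\mathrm{Est}_G(v)\cap Q_{NS}$, so the CSO condition at $h(v)$ in $G'$ is \emph{identical} to the CSO condition at $v$ in $G$. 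Since the partial observations never violate CSO, $G'$ is CSO iff the condition holds at every complete observation iff $G$ is CSO, settling both implications simultaneously.

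The main obstacle is the atomicity argument, on which everything else rests: I must argue carefully that the $\varepsilon$-closure cannot break code blocks apart and that partial reads can only land on the newly added non-secret intermediate states. This in turn relies on two structural facts about the construction --- that the intermediate states are private to each source state $p$ (so distinct transitions do not spuriously merge estimates) and that they admit no unobservable out-transitions (so the observable content of a run rigidly aligns $h(v)$ with complete blocks). Once the estimate identity $\mathrm{Est}_{G'}(h(v))=\mathrm{Est}_G(v)$ and the containment $\mathrm{Est}_{G'}(v')\subseteq Q'-Q$ for partial $v'$ are in place, the equivalence of the two opacity conditions is immediate, and injectivity of $e$ (guaranteed by $k\ge\lceil\log_2|\Gamma_o|\rceil$) is the only property of the encoding the argument actually uses.
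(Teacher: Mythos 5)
Your proof is correct and rests on the same core facts as the paper's: the morphism $h$ (the paper's $f$) extending the code $e$, the run correspondence $w\in L(G)\Leftrightarrow h(w)\in L(G')$, and the observation that any string of $G'$ reaching an original (hence possibly secret) state decomposes block-by-block into the image under $h$ of a string of $G$. The only difference is packaging: the paper routes the argument through the language-inclusion characterization $P(L_S)\subseteq P(L_{NS})$ of CSO and leaves the block-atomicity of the code gadgets implicit, whereas you work directly with state estimates and spell out the atomicity and the vacuous partial-observation case explicitly---a more self-contained but substantively identical argument.
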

  \begin{proof}
    To show that $G$ is CSO iff $G'$ is CSO, we need to show that $P(L_{S})\subseteq P(L_{NS})$ iff $P'(L'_{S})\subseteq P'(L'_{NS})$~\cite{BalunMasopust2021}, where
    \begin{itemize}
      \item $L_S=L_m(Q,\Sigma,\delta,I,Q_S)$,
      \item $L_{NS}=L_m(Q,\Sigma,\delta,I,Q_{NS})$,
      \item $L'_S=L_m(Q',(\Sigma-\Gamma_o)\cup\{0,1\},\delta',I,Q_S)$, and
      \item $L'_{NS}=L_m(Q',(\Sigma-\Gamma_o)\cup\{0,1\},\delta',I,Q_{NS}')$, for $Q_{NS}' = Q_{NS} \cup (Q'-Q)$.
    \end{itemize}

    We define a morphism $f\colon \Sigma^* \to ((\Sigma-\Gamma_o)\cup\{0,1\})^*$ such that $f(a)=e(a)$ for $a\in \Gamma_o$, and $f(a)=a$ for $a\in\Sigma-\Gamma_o$. By the definition of $e$ and the construction of $G'$, $w \in L(G)$ iff $f(w) \in L(G')$. In particular, $P(w)\in P(L_{S})$ iff $P'(f(w)) \in P'(L'_{S})$, and $P(w)\in P(L_{NS})$ iff $P'(f(w)) \in P'(L'_{NS})$. Therefore, if $P'(L'_{S})\subseteq P'(L'_{NS})$ then $P(L_{S})\subseteq P(L_{NS})$. On the other hand, assume that $P(L_{S})\subseteq P(L_{NS})$, and consider any $P'(x) \in P'(L'_{S})$. Then, $P'(x)$ is of the form $P'(f(y))$ for some $y\in L_{S}$, and $P(y)\in P(L_{S})\subseteq P(L_{NS})$ implies that $P'(x)=P'(f(y)) \in P'(L'_{NS})$.
  \end{proof}

\section{Logarithmic Encoding of a K-Step Counter}\label{appB}
  In this appendix, we construct an NFA $\A_\textrm{K}$ of size polynomial in the logarithm of K such that the observer of $\A_\textrm{K}$ has a unique path of length K consisting solely of non-marked states, while all the other states are marked. This path plays the role of a K-step counter that is essential in the transformation from K-SO to CSO of Section~\ref{redKSOtoCSO}.
  To construct the automaton $\A_\textrm{K}$, we make use of NFAs $\A_{k,n}$, for every $k,n\ge 1$, that can be constructed in time polynomial in $k$ and $n$ and that are similar to NFAs we used earlier~\cite{Krtzsch2017}, though we need to adjust them.

  \begin{lem}\label{exprponfas}
    For every integers $k,n\ge 1$, there is an NFA $\A_{k,n}$ with $n$ events and $n(k+2)$ states, such that $\A_{k,n}$ accepts all strings except for all prefixes of a unique string $W_{k,n}$, which is a string of length $\binom{k+n}{k}-1$.
  \end{lem}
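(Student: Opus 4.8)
The plan is to build $\A_{k,n}$ so that its accepted language is the complement of the prefix-closed language $\mathrm{Pref}(W_{k,n})$, where $W_{k,n}$ is the output of an ``odometer'' that enumerates, one symbol at a time, all $n$-tuples of nonnegative integers with coordinate sum at most $k$. Since the number of such tuples is exactly $\binom{n+k}{n}=\binom{n+k}{k}$, and the empty prefix corresponds to the first tuple while the full word corresponds to the last, the word $W_{k,n}$ has length $\binom{k+n}{k}-1$, as claimed. The first thing I would do is fix this enumeration precisely: reading the $i$-th alphabet symbol $a_i$ encodes ``increment coordinate $i$ and carry,'' and I would prove by an induction using the Pascal identity $\binom{n+k}{k}=\binom{n+k-1}{k}+\binom{n+k-1}{k-1}$ that the odometer visits every admissible tuple exactly once, so that $W_{k,n}$ is unique and has the stated length.

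Next I would describe the automaton itself. The point of nondeterminism is that a word $w$ is a prefix of $W_{k,n}$ iff a conjunction of simple ``local'' counting constraints holds simultaneously (one constraint per coordinate, bounding how often and in what order each $a_i$ may legally occur), so $w$ is a non-prefix iff at least one constraint is violated or $w$ overruns the counter. Accordingly, I would give $\A_{k,n}$ exactly $n$ gadgets, one per symbol, each a short chain of $k+2$ states that tracks a single coordinate and ``fires'' into an accepting sink as soon as it detects that its coordinate has been driven out of range or updated out of turn. The automaton nondeterministically guesses which coordinate is the culprit and runs the corresponding gadget; $n$ gadgets of $k+2$ states give the required $n(k+2)$ states and $n$ events. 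This mirrors the NFAs of~\cite{Krtzsch2017}, the adjustment being that acceptance is tuned so that exactly the prefixes of $W_{k,n}$ (rather than some related set) are rejected.

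For correctness I would prove the two inclusions separately. For soundness, that every prefix of $W_{k,n}$ is \emph{rejected}, I would show that along a genuine prefix no gadget can ever fire, because each coordinate stays within range and is incremented in the legal order dictated by the odometer; hence no accepting run exists. For completeness, that every non-prefix is \emph{accepted}, I would take the first position at which $w$ departs from $W_{k,n}$ (or the position at which $w$ exceeds its length), identify the coordinate whose constraint is first broken, and exhibit an accepting run of the gadget guessing that coordinate. Together with the length and state counts from the first step, this yields the lemma; a clean alternative for the bookkeeping is induction on $n$, adding one $(k{+}2)$-state block per new symbol and maintaining the invariant on the bad word.

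I expect the main obstacle to be the combinatorial bookkeeping in two coupled places: guaranteeing that the odometer is a \emph{bijection} onto the admissible tuples (so $W_{k,n}$ is unique and of length exactly $\binom{k+n}{k}-1$, with carries between coordinates handled correctly), and simultaneously ensuring that the small per-coordinate gadgets fire on \emph{exactly} the non-prefixes---never on a legitimate prefix (soundness) and always on a first deviation (completeness). Reconciling these two requirements within the tight budget of $k+2$ states per coordinate, so that a carry out of coordinate $i$ is correctly reflected as a legal update of coordinate $i+1$ rather than triggering a spurious acceptance, is the delicate part.
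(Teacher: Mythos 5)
Your combinatorial setup is sound and in fact matches the paper's: the recursion $W_{k,n}=W_{k,n-1}\,a_n\,W_{k-1,n}$ used in the paper is exactly the Pascal decomposition $\binom{k+n}{k}=\binom{k+n-1}{k}+\binom{k+n-1}{k-1}$ of your odometer over $n$-tuples with coordinate sum at most $k$, and the length claim follows. The gap is in the automaton architecture. You propose $n$ \emph{independent} gadgets of $k+2$ states, each ``tracking a single coordinate,'' with the NFA guessing which coordinate is the culprit. But whether an occurrence of $a_i$ is ``out of turn'' is not a property of coordinate $i$ alone: reading $a_i$ is legal precisely when the lower-order part of the odometer has completed its sweep, which depends on the joint state of the other coordinates. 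Concretely, for $W_{2,2}=a_1a_1a_2a_1a_2$, the string $a_1a_2$ is a non-prefix because $a_2$ arrives when $c_1=1$ rather than $c_1=2=k-c_2$; a $4$-state chain that only counts $a_2$'s sees nothing wrong, and a chain that only counts $a_1$'s cannot tell whether $c_1=1$ is wrong without knowing $c_2$. Similarly, for $a_1a_1a_2a_2$ the second $a_2$ is illegal only because $c_1=0\neq k-c_2=1$. No single $(k+2)$-state gadget running independently from the start can certify these violations, so the union of what your $n$ gadgets accept will not cover all non-prefixes (or, if you loosen the constraints to cover them, will falsely accept genuine prefixes).

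The paper's construction uses the same state budget ($n$ chains of $k+2$ states) but resolves exactly this difficulty by making the chains \emph{interconnected} rather than independent, and by having chain $m$ track a cumulative level rather than the raw count of $a_m$. A run in the chain for $a_n$ that has counted $i$ occurrences of $a_n$ is injected by a transition $(i;n)\xrightarrow{a_n}(i+1;m)$ into the chain for $a_m$ ($m<n$) \emph{at level} $i+1$, so that the level stored in chain $m$ equals the number of symbols of the current recursion depth already consumed; the violation ``$a_n$ arrives before the segment $W_{k-i,n-1}$ is complete'' then becomes the purely local test ``a larger symbol arrives while the level is $\neq k$,'' implemented by the transitions $(i;m)\xrightarrow{a_n}(k+1;n)$ for $i\neq k$. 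A single accepting run thus weaves through several chains, which is precisely what your ``guess one culprit coordinate and run its gadget in isolation'' scheme forbids. To repair your proof you would have to either adopt this interconnection (at which point you have reproduced the paper's construction) or exhibit a genuinely different family of $n$ constraints, each decidable by a $(k+2)$-state chain over the full alphabet \emph{without} input from the other chains, whose conjunction characterizes the prefixes of $W_{k,n}$ --- and the examples above indicate that no such family of single-coordinate counters exists.
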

  \begin{proof}
    For $k,n\geq 1$, we define $W_{k,n}$ over $\Sigma_n = \{a_1,\ldots, a_n\}$ by setting $W_{k,1} = a_1^k$, $W_{1,n} = a_1a_2\cdots a_n$, and
    \begin{align}
      W_{k,n} & = W_{k,n-1}\, a_{n}\, W_{k-1,n}\nonumber\\
        & = W_{k,n-1}\, a_n\, W_{k-1,n-1}\, a_n\, W_{k-2,n}\label{wkn}\\
        & = W_{k,n-1}\, a_n\, W_{k-1,n-1}\, a_n\, \cdots\, a_n\, W_{1,n-1}\, a_n\,. \nonumber
    \end{align}
    The construction is illustrated in Table~\ref{tableWords}.
    \begin{table}\centering
      \ra{1.2}
      \caption{Strings $W_{k,n}$ used in the proof of Lemma~\ref{exprponfas}.}
      \label{tableWords}
      \begin{tabular}{@{}llll@{}}\toprule
        $k\backslash n$ & 1 & 2 & 3 \\
          \midrule
        1 & $a_1$    & $a_1a_2$                      & $a_1a_2a_3$\\
        2 & $a_1^2$  & $a_1^2 a_2 a_1 a_2$           & $a_1^2 a_2 a_1 a_2 a_3 a_1a_2a_3$\\
        3 & $a_1^3$  & $a_1^3 a_2 a_1^2 a_2 a_1 a_2$ & $a_1^3 a_2 a_1^2 a_2 a_1 a_2 a_3 a_1^2 a_2 a_1 a_2 a_3 a_1a_2a_3$\\
        \bottomrule
      \end{tabular}%
    \end{table}
    The length of $W_{k,n}$ is $\binom{k+n}{k}-1$, and $a_n$ appears exactly $k$ times in $W_{k,n}$~\cite{dlt15}. For defining $\A_{k,n}$, it is useful to set $W_{k,n}=\eps$ whenever $kn=0$.

    We construct an NFA $\A_{k,n}$ over $\Sigma_n$ marking $\Sigma_n^*-\overline{\{W_{k,n}\}}$. For $k\ge 0$, $\A_{k,1}$ is the minimal DFA marking $\{a_1\}^* - \overline{\{a_1^k\}}$, consisting of $k+2$ states of the form $(i;1)$, see Figure~\ref{rpoNFAn1}, together with the given transitions.
    State $(k+1;1)$ is marked, state $(0;1)$ is initial.

    \begin{figure}[b]
      \centering
      \includegraphics[scale=.55]{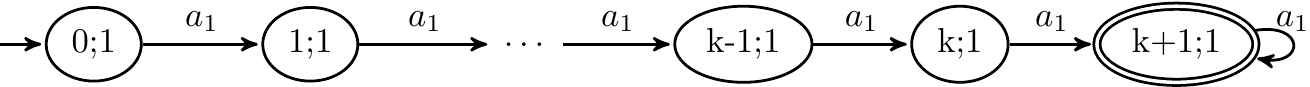}
      \caption{The NFA $\A_{k,1}$ with $k+2$ states.}
      \label{rpoNFAn1}
    \end{figure}

    \begin{figure}
      \centering
      \includegraphics[scale=.55]{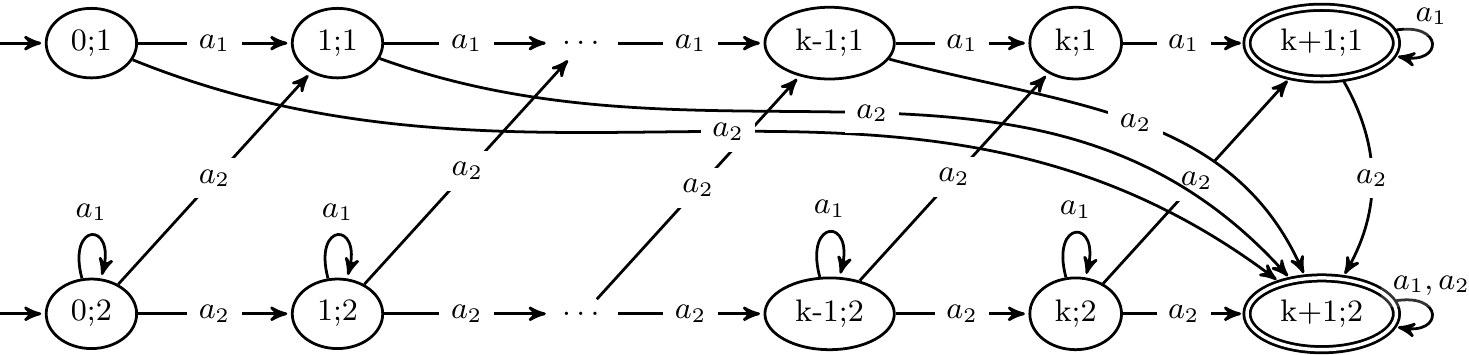}
      \caption{The NFA $\A_{k,2}$ with $2(k+2)$ states.}
      \label{rpoNFAn2}
    \end{figure}
    Given $\A_{k,n-1}$, we construct $\A_{k,n}$ from $\A_{k,n-1}$ by adding $k+2$ states $(0;n),(1;n),\ldots,(k+1;n)$, where $(0;n)$ is added to initial, and $(k+1;n)$ to final states, see Figure~\ref{rpoNFAn2} illustrating the construction for $n=2$; $\A_{k,n}$ has $n(k+2)$ states. We call the state $(k+1,n)$ {\em maximal}.
    Additional transitions of $\A_{k,n}$ consist of four groups:
    \begin{enumerate}
      \item\label{r1} Self-loops $(i;n)\xrightarrow{a_j}(i;n)$ for every $i\in\{0,\ldots,k+1\}$ and $a_j\in\{a_1,\ldots,a_{n-1}\}$;
      \item\label{r2} Transitions $(i;n)\xrightarrow{a_n}(i+1;n)$ for every $i\in\{0,\ldots,k\}$, and the self-loop $(k+1;n)\xrightarrow{a_n} (k+1;n)$;
      \item\label{r3} Transitions $(i;n)\xrightarrow{a_n}(i+1;m)$ for every $i\in\{0,\ldots,k\}$ and $m\in\{1,\ldots,n-1\}$;
      \item\label{r4} Transitions $(i;m)\xrightarrow{a_n}(k+1;n)$ for every state $(i;m)$ of $\A_{k,n-1}$ with $i\neq k$.
    \end{enumerate}
    The additional states of $\A_{k,n}$ and transitions (\ref{r1}) and (\ref{r2}) ensure marking of all strings containing more than $k$ events $a_n$.
    The transitions (\ref{r3}) and (\ref{r4}) ensure marking of all strings in $(\Sigma_{n-1}^* a_n)^{i+1} L(\A_{k-(i+1),n-1})a_n \Sigma_n^*$ for which the string between the $(i+1)$-st and the $(i+2)$-nd occurrence of $a_n$ is not of the form $W_{k-(i+1),n-1}$, and hence not a correct substring of $W_{k,n} = W_{k,n-1} a_n \cdots a_n W_{k-(i+1),n-1} a_n \cdots\allowbreak a_n W_{1,n-1} a_n$.
    The transitions (\ref{r4}) ensure that all strings with a prefix $w a_n$, where $w$ is any string from $\Sigma_{n-1}^*-\{W_{k,n-1}\}$, are marked.
    Together, these conditions ensure that $\A_{k,n}$ marks every string that is not a prefix of $W_{k,n}$.

    It remains to show that $\A_{k,n}$ does not mark any prefix of $W_{k,n}$, which we show by induction on $(k,n)$.
    For $(0,n)$, $n\geq 1$, string $W_{0,n}=\eps$ is not marked by $\A_{0,n}$, since the initial states $(0,m)=(k,m)$ of $\A_{0,n}$ are not marked.
    Likewise, for $(k,1)$, $k\ge 0$, we find that the prefixes of $W_{k,1}=a_1^k$ are not marked by $\A_{k,1}$ (cf.~Figure~\ref{rpoNFAn1}).
    For the inductive case $(k,n)\ge (1,2)$, where $\le$ is the standard product order, we assume that $\A_{k',n'}$ marks no prefix of $W_{k',n'}$ for any $(k',n') < (k,n)$ and that $W_{k',n'}$ leads $\A_{k',n'}$ only to states of the form $(k;\cdot)$.
    Then, $W_{k,n} = W_{k,n-1} a_n W_{k-1,n}$, and no prefix of $W_{k,n-1}$ is marked by $\A_{k,n-1}$ by induction.
    In addition, there is no transition under $a_n$ from a state $(k;m)$ with $m\neq n$ in $\A_{k,n}$.
    Therefore, if a prefix of $W_{k,n}$ is marked by $\A_{k,n}$, it must be marked in a run starting from the initial state $(0;n)$.
    Since $W_{k,n-1}$ contains no $a_n$, we find that $\A_{k,n}$ can only reach states $\delta((0;n), W_{k,n-1} a_n) =\{(1;m)\mid 1\le m \le n\}$ after generating $W_{k,n-1} a_n$, which are the initial states of $\A_{k-1,n}$.
    By induction, $\A_{k-1,n}$ marks no prefix of $W_{k-1,n}$, and hence no prefix of $W_{k,n}$ is marked by $\A_{k,n}$.
  \end{proof}

  \begin{figure}
    \centering
    \includegraphics[scale=.55]{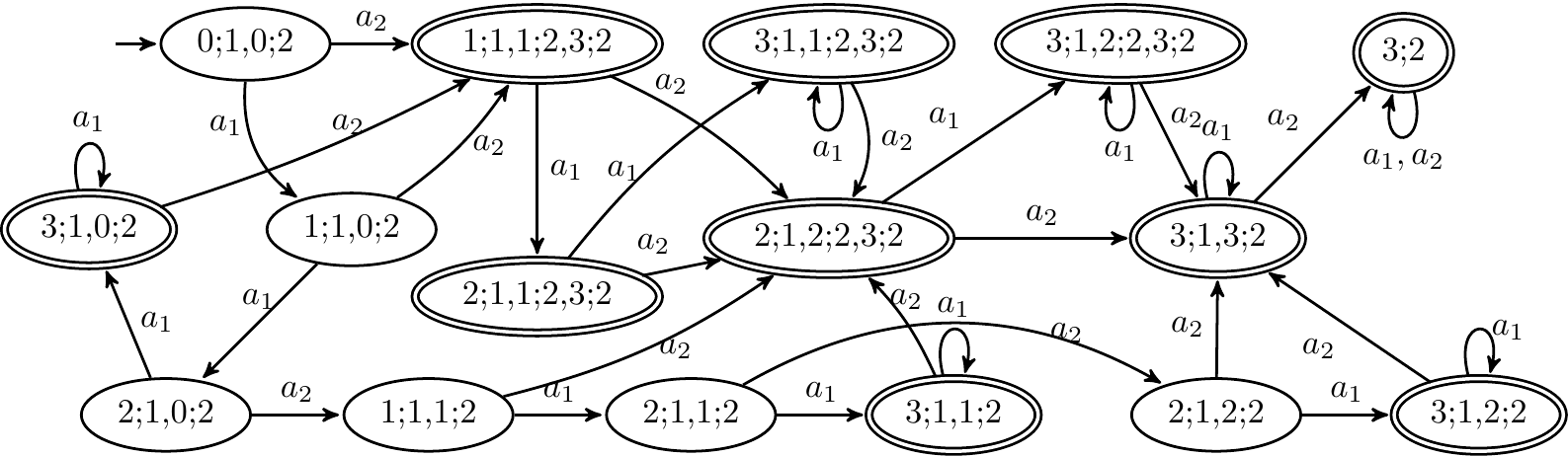}
    \caption{The observer of the NFA $\A_{2,2}$.}\label{figA22obs}
  \end{figure}

  To illustrate the construction, we consider $k=n=2$. Then, $W_{2,2}=a_1^2a_2a_1a_2$, the NFA $\A_{2,2}$ has 8 states, and the observer of $\A_{2,2}$, shown in Figure~\ref{figA22obs}, contains a unique path of length $\binom{4}{2}-1=5$ consisting solely of non-marked states while all the other states are marked.

  We now show how to use the NFAs $\A_{i,j}$ to construct an automaton $\A_\textrm{K}$ of size polynomial in the logarithm of K such that the observer of $\A_\textrm{K}$ has a unique path consisting solely of non-marked states, while all the other states are marked. For simplicity, and without loss of generality, we only use the automata of the form $\A_{j,j}$.

  Since $\binom{2n+2}{n+1}= \frac{4n+2}{n+1} \binom{2n}{n}$ and $\binom{2n}{n} \le 4^n$, every natural number $\textrm{K}$ can be expressed as
  \[
    \textrm{K} = b_n \binom{2n}{n} + b_{n-1} \binom{2n-2}{n-1} + \cdots + b_1 \binom{2}{1} + b_0
  \]
  for some $n\le \lceil\log_4(\textrm{K}+1)\rceil$ and $b_i \in \{0,1,2,3\}$, $i=0,\ldots,n$. This expression is not unique, e.g., $\textrm{K}=2$ can be expressed as $b_0=2$, or $b_1=1$ and $b_0=0$.

  For every $b_i$, $i=n,\ldots,0$, we create $b_i$ copies of $\A_{i,i}$ over $\Sigma_i=\{a_1\ldots,a_i\}$, which results in a sequence of automata $\B_1,\ldots,\B_\ell$. We take a new event $c\notin\Sigma_n$ and connect all the automata $\B_1,\ldots,\B_\ell$ to a single automaton $\A_\textrm{K}$ by $c$\mbox{-}tran\-si\-tions as follows.
  For $j=1,\ldots,\ell-1$, we add a $c$\mbox{-}tran\-si\-tion from every non-marked state of $\B_j$ to every initial state of $\B_{j+1}$; from all the other states, the $c$-transition goes to the maximal state of $\B_\ell$.
  Finally, we add a new state, $q_0$, which is the only initial state of the automaton $\A_\textrm{K}$, $c$-transitions from $q_0$ to all initial states of $\B_1$, and transitions under all the other events to the maximal state of $\B_\ell$; see an illustrative example below.

  Then, the observer of $\A_\textrm{K}$ has a unique path consisting of non-marked states along the string
  \[
    (c W_{n,n})^{b_n} (c W_{n-1,n-1})^{b_{n-1}} \cdots (c W_{0,0})^{b_0}
  \]
  of length $\textrm{K} = b_n \binom{2n}{n} + b_{n-1} \binom{2n-2}{n-1} + \cdots + b_0 \binom{0}{0}$, and the other states are marked. Since every $\B_j$ is of size polynomial in $n$, we obtain that $\A_\textrm{K}$ is of size polynomial in the logarithm of $\textrm{K}$ and its observer has a unique path of length $\textrm{K}$ consisting solely of non-marked states, with all the other states marked.

  \begin{lem}
    For every natural number $\textrm{K}$, there is an automata $\A_\textrm{K}$ of size polynomial in $O(\log \textrm{K})$ such that the observer of $\A_\textrm{K}$ has a unique path of length $\textrm{K}$ consisting solely of non-marked states, and with all the other states marked.
    \qed
  \end{lem}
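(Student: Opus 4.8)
The plan is to build $\A_\textrm{K}$ as a chain of the automata $\A_{i,i}$ of Lemma~\ref{exprponfas} glued by the fresh event $c$, so that the unique word on which its observer stays non-marked is a concatenation of the words $W_{i,i}$ of total length exactly $\textrm{K}$. First I would fix a representation of $\textrm{K}$ in the central-binomial number system. From the identity $\binom{2n+2}{n+1}=\frac{4n+2}{n+1}\binom{2n}{n}$ the ratio of consecutive central binomial coefficients lies in $[2,4)$, so a greedy choice works: take the largest $n$ with $\binom{2n}{n}\le \textrm{K}$ and set $b_n=\lfloor \textrm{K}/\binom{2n}{n}\rfloor$; then $\textrm{K}<\binom{2n+2}{n+1}<4\binom{2n}{n}$ forces $b_n\le 3$, and recursing on $\textrm{K}-b_n\binom{2n}{n}<\binom{2n}{n}$ yields digits $b_i\in\{0,1,2,3\}$ with $\textrm{K}=\sum_{i=0}^n b_i\binom{2i}{i}$, while $\binom{2n}{n}\le 4^n$ gives $n\le\lceil\log_4(\textrm{K}+1)\rceil$. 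Assembling $\B_1,\dots,\B_\ell$ from $b_i$ copies of $\A_{i,i}$ and adding $q_0$ and the $c$-transitions as described, I set
\[
  U=(cW_{n,n})^{b_n}(cW_{n-1,n-1})^{b_{n-1}}\cdots(cW_{0,0})^{b_0},
\]
and since each block $cW_{i,i}$ contributes $1+(\binom{2i}{i}-1)=\binom{2i}{i}$ letters, $|U|=\sum_{i=0}^n b_i\binom{2i}{i}=\textrm{K}$.

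The size estimate is then routine: there are $\ell=\sum_i b_i\le 3(n+1)$ blocks, each $\A_{i,i}$ contributes $i(i+2)\le n(n+2)$ states over at most $n$ events, so together with $q_0$ the automaton $\A_\textrm{K}$ has $O(n^3)$ states and $O(n)$ events, and since $n=O(\log\textrm{K})$ this is polynomial in $\log\textrm{K}$.

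The core of the argument, and the step I expect to be the main obstacle, is the observer analysis: I must prove that the strings not marked by $\A_\textrm{K}$ are exactly the prefixes of $U$, so that the non-marked observer states form a single path of length $\textrm{K}$ and all remaining observer states are marked. The forward direction---every prefix of $U$ keeps the observer non-marked---follows by applying Lemma~\ref{exprponfas} inside each block (no prefix of $W_{i,i}$ reaches a marked state of $\A_{i,i}$), together with the fact, extracted from the proof of that lemma, that $W_{i,i}$ drives $\A_{i,i}$ into its end states $(i;\cdot)$, from which the gluing $c$-transition continues into the initial states of the next block. The converse is the delicate part: I would show that every string that is not a prefix of $U$ is marked, by a case analysis along the blocks. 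A letter of $\Sigma_n$ that departs from the current $W_{i,i}$ is caught by the internal structure of $\A_{i,i}$, and the real point to verify is that reading $c$ anywhere other than at a completed block boundary, as well as any illegal first event at $q_0$, routes the observer to the maximal (marked) sink of $\B_\ell$, so that no non-marked ``bypass'' of a block is created. Making this precise---i.e.\ confirming that the $c$-gluing admits no shorter non-marked word and hence that the non-marked path is genuinely unique and of length exactly $\textrm{K}$---is where the argument must be done with care, and I would carry it out by induction on the blocks, tracking the observer subset reached after each $c$ and invoking in each block the uniqueness of $W_{i,i}$ as the longest non-accepted word of $\A_{i,i}$ from Lemma~\ref{exprponfas}.
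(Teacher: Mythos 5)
Your construction is exactly the paper's: the same central-binomial expansion $\textrm{K}=\sum_i b_i\binom{2i}{i}$ with digits $b_i\in\{0,1,2,3\}$ (the paper merely asserts this representation exists, whereas you supply the greedy argument), the same chain of $b_i$ copies of $\A_{i,i}$ glued by the fresh event $c$ with the marked states routed to the maximal sink, and the same length and size accounting. The observer analysis that you correctly single out as the delicate step is likewise only asserted, not carried out, in the paper (the lemma is stated with the construction as its proof), so your sketch is at least as detailed as the published argument and follows the same route.
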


  \begin{figure}
    \centering
    \includegraphics[scale=.49]{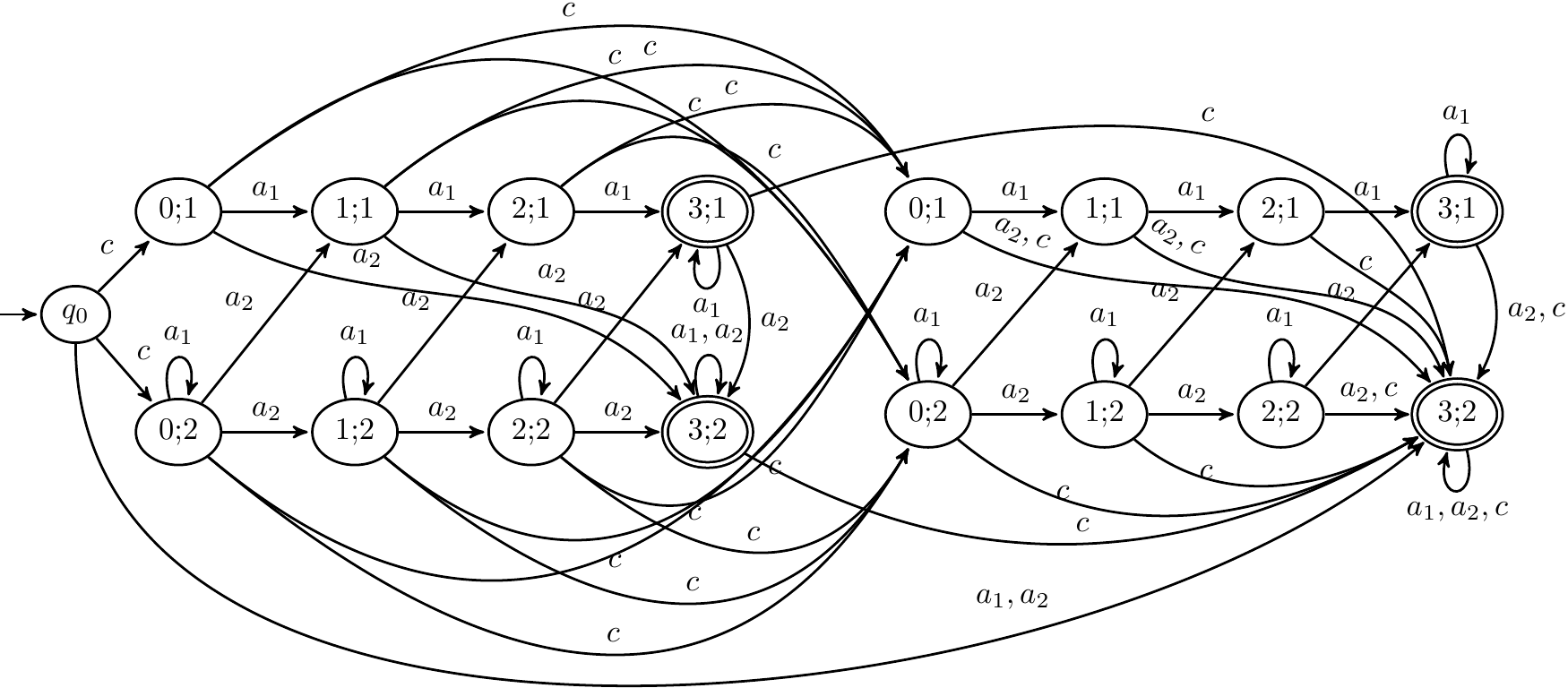}
    \caption{Example for $\textrm{K}=12$, which gives $a_2=2$, the automaton $\A_{12}$ consisting of two copies of $\A_{2,2}$.}
    \label{figEX1}
  \end{figure}

  \begin{figure}
    \centering
    \includegraphics[scale=.8]{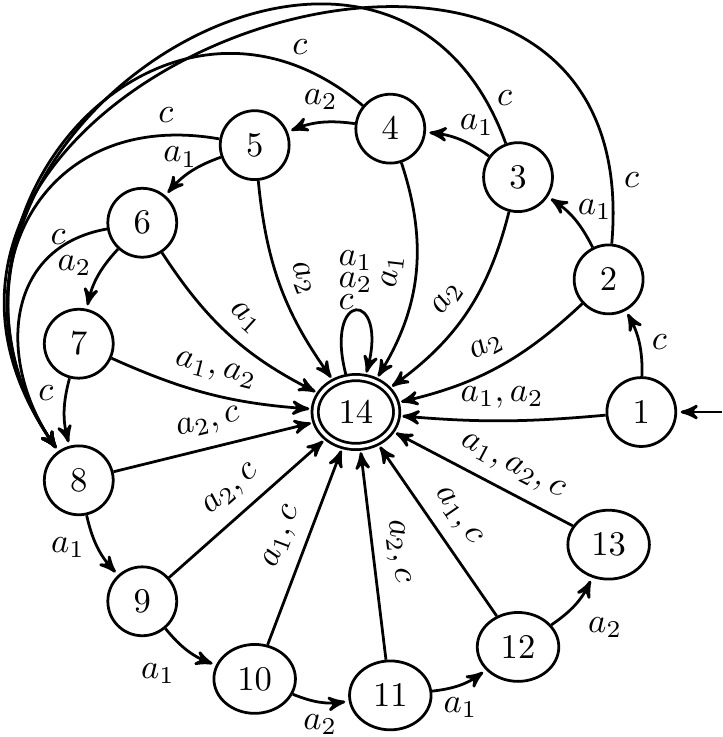}
    \caption{The min. DFA of the observer with the unique path of length $12$.}
    \label{figEX2}
  \end{figure}

  For an illustration, consider $\textrm{K}=12 = 2 \binom{4}{2} + 0 \binom{2}{1} + 0$. We create two copies of $\A_{2,2}$ and connect them by $c$-transitions as shown in Figure~\ref{figEX1}. The observer with the unique path of non-marked states of length $\textrm{K}=12$ is shown in Figure~\ref{figEX2}.

\section{Preserving Determinism}\label{appC}
  In this section, we show how to make an NFA deterministic without affecting the property of being K-step opaque, for any K~$\in \mathbb{N}_\infty$, by adding a few unobservable events.

  From an NFA $G=(Q,\Sigma,\delta,I,F)$, we construct a DFA $G'=(Q',\Sigma',\delta',I,F)$ as follows. For every state $p$ and an event $a$ with $|\delta(p,a)|>1$, we replace every transition $(p,a,q)$ of $G$ with two transitions
  $(p,u,p')$ and $(p',a,q)$, where $p'$ is a new state and $u$ is a new unobservable event (neither $p'$ nor $u$ are reused), see Figure~\ref{fig_det} for an illustration. The secret status of the new state $p'$ is set according to the status of state $p$, that is, $p'$ is secret iff $p$ is. Notice that $G'$ can be constructed from $G$ in polynomial time.

  \begin{figure}
    \centering
    \includegraphics[scale=.7]{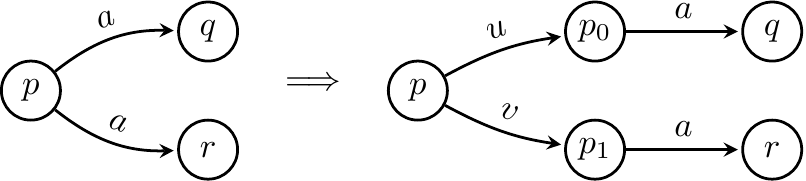}
    \caption{Determinization of an NFA.}
    \label{fig_det}
  \end{figure}

  \begin{lem}
    System $G$ is K-SO w.r.t. $Q_S$, $Q_{NS}$, and $P\colon \Sigma^*\to\Sigma_o^*$ iff $G'$ is K-SO w.r.t. $Q_{S}'$, $Q_{NS}'$, and $P'\colon \Sigma'^* \to \Sigma_o^*$.
  \end{lem}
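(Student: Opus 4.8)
The plan is to establish a tight correspondence between computations in $G$ and computations in $G'$ via the projection $P'$, and then to transfer the K-step opacity condition back and forth. The key structural observation is that the determinization construction only inserts a fresh unobservable event $u$ immediately before a nondeterministic observable transition $(p,a,q)$, routing through a fresh intermediate state $p'$ whose secret status mirrors that of $p$. Consequently, $P'$ erases all the newly introduced $u$'s, so for any string $w \in \Sigma^*$ generated by $G$ there is a canonically corresponding string $\hat w \in \Sigma'^*$ generated by $G'$ with $P'(\hat w) = P(w)$, and conversely every string of $G'$ collapses under $P'$ to a string of $G$. First I would make this precise by defining the insertion map $w \mapsto \hat w$ and checking that $\delta(I,w) = \delta'(I,\hat w) \cap Q$ (the intermediate states $p'$ are transient and never block or create reachability), together with the dual statement that if $\delta'(I,v) \neq \emptyset$ for $v \in \Sigma'^*$ then stripping the $u$'s yields a string $w$ with $P(w) = P'(v)$ and the same reachable $Q$-states.

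Next I would verify that this correspondence respects the secret/non-secret classification. Because $Q_S' \supseteq Q_S$ and $Q_{NS}' \supseteq Q_{NS}$ differ only by the new intermediate states $p'$, whose secret status equals that of their predecessor $p$, a computation reaches a secret (resp. non-secret) state of $G'$ exactly when the corresponding computation either reaches a secret (resp. non-secret) state of $G$ or passes through one of the cloned intermediate states, which inherits the same status. The crucial point is that entering $p'$ via $u$ and then leaving via $a$ does not change which original state is ultimately witnessed, and the intermediate state is reachable as a secret state if and only if $p$ is. I would therefore argue that for the witnessing condition $\delta(\delta(I,s)\cap Q_S, t) \neq \emptyset$ appearing in the definition of K-SO, an analogous secret witness exists in $G'$ along $\hat s, \hat t$, and vice versa, with the observable lengths preserved since $|P'(\hat t)| = |P(t)|$.

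Given these two facts, the equivalence follows by unwinding the definition of K-SO in both directions. For the forward direction, suppose $G$ is K-SO and take any $st \in L(G')$ (over $\Sigma'$) with $|P'(t)| \le K$ and a secret witness; collapse it to a string $s_0 t_0 \in L(G)$ with $P(s_0 t_0) = P'(st)$ and $|P(t_0)| = |P'(t)| \le K$, apply K-SO of $G$ to obtain a non-secret witness $s_0' t_0'$, and lift it back through the insertion map to a string of $G'$ that looks the same under $P'$ and reaches a non-secret state. The backward direction is symmetric: any violating pair in $G$ lifts to a violating pair in $G'$ because the inserted $u$-transitions and cloned states neither supply new non-secret look-alikes nor destroy existing secret witnesses.

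The main obstacle I anticipate is bookkeeping around the intermediate states $p'$: I must be careful that a string ending \emph{inside} a cloned gadget (i.e.\ having just taken the $u$-transition into $p'$ but not yet the subsequent $a$) is handled correctly, since such a string is a legitimate element of $L(G')$ that has no exact analogue in $L(G)$. The clean way to dispose of this is to note that $P'(u) = \varepsilon$, so under the projection such a string is indistinguishable from the one that has not yet entered the gadget, and that $p'$ carries the same secret status as $p$; hence it contributes no new opacity obligation beyond what $p$ already contributes. Once this edge case is absorbed into the correspondence lemma, the rest of the argument is a routine transfer of the quantifiers in the definition of K-step opacity, and it uniformly covers all $K \in \mathbb{N}_\infty$ since $K$ enters only through the preserved quantity $|P'(\hat t)| = |P(t)|$.
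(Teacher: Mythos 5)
Your overall strategy is sound and rests on the same two facts the paper uses: the cloned states $p'$ inherit the secret status of $p$ and are reached from $p$ by an unobservable step only, and observable step counts are preserved, so $K$ enters only through $|P'(\hat t)|=|P(t)|$. The paper phrases this at the level of observer states ($p'$ belongs to an observer state iff $p$ does, hence secret/non-secret co-occurrence is unchanged), while you phrase it as a string/run correspondence; these are two presentations of the same argument, and your treatment of the ``stopped inside the gadget'' case correctly handles the only genuine edge case (a run ending at $p'$ factors as $s_1u$ with the split point pushed back to $p$, which carries the same status).

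One intermediate claim you propose to ``check'' is false as stated and would fail at exactly the transitions the construction targets: there is no single canonical $\hat w$ with $\delta(I,w)=\delta'(I,\hat w)\cap Q$. Since each replaced transition $(p,a,q)$ receives its \emph{own} fresh event $u$ and state $p'$ (this is what makes $G'$ deterministic), two runs of $G$ on the same $w$ that resolve a nondeterministic choice differently lift to runs of $G'$ on \emph{different} strings; e.g., for $\delta(p,a)=\{q_1,q_2\}$ one gets $u_1a$ reaching only $q_1$ and $u_2a$ reaching only $q_2$, and no single string of $G'$ reaches both. The correct statement is per run (each run of $G$ on $w$ lifts to a run of $G'$ on some $\hat w$ with $P'(\hat w)=P(w)$ ending in the same state, and conversely after stripping the $u$'s), or equivalently $\delta(I,P^{-1}P(w))=\delta'(I,P'^{-1}P(w))\cap Q$ at the level of projection classes. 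This is all your argument actually needs, since K-SO quantifies existentially over strings with a given projection, so the repair is routine; but as written, the correspondence lemma you would prove first is not provable.
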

  \begin{proof}
    Indeed, the number of observable steps from a state of $G$ is preserved in $G'$. Thus, we need to show that $G'$ is CSO iff $G$ is. However, every newly added state $p'$ is reachable by an unobservable event from its original state $p$, and hence $p'$ is contained in every state of the observer that contains $p$; and vice versa, because every path to state $p'$ goes through state $p$ in $G'$. Therefore, if a state of the observer contains a secret state $p'$ and a non-secret state $r'$, then it also contains the original secret state $p$ and the original non-secret state $r$. That is, $G$ is K-SO iff $G'$ is K-SO.
  \end{proof}

\newpage
\bibliographystyle{IEEEtran}
\bibliography{mybib}

\end{document}